\DeclareMathOperator*{\argmax}{arg\,max}
\pgfplotsset{compat=1.18} 
\newtheorem{theorem}{Theorem}
\newtheorem{definition}{Definition}
\newtheorem{example}{Example}
\newtheorem{corollary}{Corollary}
\newtheorem{remark}{Remark}
\newtheorem{proposition}{Proposition}
\definecolor{MyBlue}{rgb}{0,0,0.65}
\definecolor{MyRed}{rgb}{0.6,0,0.1}
\begin{document}

\title{Efficiency in Games with Incomplete Information\thanks{Atulya Jain gratefully acknowledges funding from the DATAIA Convergence Institute (ANR-17-CONV-0003), a joint initiative of HEC Paris and Hi! PARIS, from the Programme d’Investissement d’Avenir (ANR-18-EURE-0005 / EUR DATA EFM), and from the German Research Foundation (DFG) under Germany’s Excellence Strategy – EXC 2126/1–390838866.  We thank Tristan Tomala and Sarah Auster for valuable feedback and guidance, and the seminar audiences at HEC Paris and the University of Bonn for helpful comments.}

\author{
Itai Arieli\footnote{ Technion:  Israel Institute of Technology and University of Toronto. Email: \texttt{iarieli@technion.ac.il}.} \and
Yakov Babichenko\footnote{Technion: Israel Institute of Technology. Email: \texttt{yakovbab@technion.ac.il}.} \and
Atulya Jain\footnote{University of Bonn. Email: \texttt{ajain@uni-bonn.de}.} \and
Rann Smorodinsky\footnote{Technion: Israel Institute of Technology. Email: \texttt{rann@technion.ac.il}.}
}}

\date{\today}
\maketitle
\begin{abstract}


We study games with incomplete information and characterize when a feasible outcome is Pareto efficient. Outcomes with excessive randomization are inefficient: generically, the total number of action profiles across states must be strictly less than the sum of the number of players and the number of states. We consider three applications. A cheap talk outcome is efficient only if pure; with state-independent sender payoffs, it is efficient if and only if the sender’s most preferred action is induced with certainty. In natural settings, Bayesian persuasion outcomes are inefficient across many priors. Finally, ranking-based allocation mechanisms are inefficient under mild conditions.
\end{abstract}
\section{Introduction}

A central question in economics is how individual incentives interact with social welfare. A natural benchmark for studying this relationship is Pareto efficiency—outcomes in which no individual can be made strictly better off without making another worse off. However, in many real-world settings, individuals make decisions under incomplete information, making it unclear when efficiency can be achieved. They face uncertainty about the underlying state of the world and about what others know. Each  holds private information that could, in principle, improve everyone’s welfare, yet individual incentives often stand in the way of efficiency. This tension raises two fundamental questions: \emph{When is a feasible outcome Pareto efficient? When can strategic behavior lead to  efficiency?}

A complementary motivation comes from information design (see \citealp{bergemann2019information}). In environments with incomplete information, a designer decides how to disclose information about the unknown state of the world. This information, together with the players’ incentives, determines the resulting outcome---a mapping from states to distributions over action profiles---that fully determines the players’ payoffs. Understanding when such outcomes are Pareto efficient is key to understanding the potential and the limits of information design for improving social welfare.

The contribution of this paper is to provide a  necessary condition for efficiency in finite games with incomplete information. We derive a simple  condition that depends only on the number of action profiles taken across states. We then apply this result to equilibrium outcomes in cheap talk, Bayesian persuasion, and an allocation problem without transfers, showing that incentive constraints often prevent efficiency.

Our main result shows that excessive randomization over action profiles leads to inefficiency.  Generically, an outcome is ex-ante efficient only if the total number of action profiles played across all states is strictly less than the sum of the number of players and the number of states (\Cref{necessary}). Notably, this condition does not depend on the prior, the action profiles used, or the weight of randomization. For instance, consider the case with two players. An outcome can be pure, quasi-pure---deterministic in all but one state with binary randomization in that state---or mixed, meaning neither pure nor quasi-pure. Generically, only pure or quasi-pure outcomes can be efficient.

This result builds on a link between ex-post and ex-ante efficiency that we establish using convex geometry (\Cref{prop:exanteexpost}). An outcome is ex-post efficient in a given state if the state-contingent outcome maximizes some positive weighted sum of the players’ payoffs in that state. By contrast, we show that an outcome is ex-ante efficient if and only if all state-contingent outcomes maximize a \emph{common} positive weighted sum of the players’ payoffs across \emph{all} states. Intuitively, ex-ante efficiency therefore requires a single set of welfare weights to support all states simultaneously. Generically, when too many action profiles are played across states, no such common positive weight exists.

We next illustrate how our results apply in three canonical settings.

First, we consider the cheap talk model, where the sender cannot commit to a signaling policy. This lack of commitment imposes strict incentive constraints on the sender. We show that, generically, a cheap talk outcome is efficient only if it is pure (\Cref{cheaptalk}). Any stochastic equilibrium outcome requires the sender to be indifferent between multiple actions in a given state. However, the receiver generically prefers one of these actions, making the outcome inefficient. When the sender’s payoff is state-independent, a cheap talk outcome is efficient if and only if the sender’s most preferred action is chosen with certainty (\Cref{cheaptalk-transparent}). In this case, any equilibrium in which communication affects the receiver’s action is inefficient.

Second, we examine the Bayesian persuasion model in a natural environment  with one safe action and several risky actions. There are as many states as actions, and in each state a distinct action is optimal for the receiver. Meanwhile, the sender prefers any risky action over the safe action. To increase the likelihood that some risky action is taken, the sender seeks to shift the receiver’s beliefs toward states where a risky action is optimal. However, to satisfy the receiver’s obedience constraint, the sender typically needs to randomize his recommendations. Building on this observation, we show that for a wide range of priors and preferences, the Bayesian persuasion outcome is necessarily mixed and therefore inefficient (\Cref{BPfixedpref} and \ref{BPfixedprior}).

Finally, our insights extend beyond two-player games. We illustrate this by applying our results to the allocation problem studied by \citet{niemeyer2024optimal}. The principal allocates a good among many agents with peer information. Under mild assumptions, the ranking-based mechanism assigns the good with positive probability to at least two eligible agents in each state. This unavoidable randomization violates our bound on the number of actions, implying inefficiency (\Cref{prop:peer}).

Given its importance, a broad literature studies the conditions under which outcomes are Pareto efficient. Early work examined the efficiency of Nash equilibria \citep{case1974class,Dubey1986,cohen1998cooperation} and conditions for implementing efficient outcomes under incomplete information \citep[Ch. 10]{ holmstrom1983efficient,myersonbook1991}. Another line of research studies efficiency from a learning perspective, developing adaptive procedures that lead to efficient outcomes \citep{arieli2012average,pradelski2012learning, marden2014achieving, jindani2022learning}, while \citet{arieli2017sequential} examine commitment procedures that can induce efficiency in extensive-form games. In a closely related paper, \citet{Rudov} analyze when a Nash equilibrium can be improved upon by a correlated equilibrium, using the convexity property of the set of correlated equilibria. Like us, they derive geometric conditions that restrict the extent of randomization. However, their analysis focuses on efficiency within the set of equilibria, whereas we study efficiency relative to the set of all feasible outcomes. Closer to our setting with incomplete information, they show that a Bayesian Nash equilibrium is generically an extreme point of the set of Bayes correlated equilibria if and only if it is pure, whereas in our framework even a pure Bayesian Nash equilibrium may fail to be efficient (\Cref{cor:typecontingent}).

Our work also contributes to the literature on  strategic communication. In cheap talk \citep{crawford1982strategic}, the sender's message is unverifiable, while in Bayesian persuasion \citep{kamenicagentzkow}, the sender commits to how the message is generated.\footnote{ There has  been some work on selecting equilibria in cheap talk games that are Pareto dominant; see \cite{crawford1982strategic} and \cite{antic2023equilibrium}.  We study efficiency relative to all feasible payoffs, not just equilibria.} The closest work within this literature is \citet{ichihashi2019limiting}, who analyzes how restrictions on the sender’s information in Bayesian persuasion affect the equilibrium outcome. In particular, he shows that the persuasion outcome is always efficient with binary actions, whereas our results demonstrate that efficiency may fail when there are more than two actions. \citet{rayo2010optimal} study disclosure rules that maximize a weighted sum of sender and receiver welfare. \citet{doval2024persuasion} analyze a welfare function over a heterogeneous population and characterize the Pareto frontier of  outcomes achievable through information policies.

The remainder of the paper is organized as follows. Section \ref{sec:model} introduces the model and our notion of  efficiency. Section \ref{sec:mainresult} presents our main result. Section \ref{sec:applications} applies the result to three economic environments: cheap talk, Bayesian persuasion, and an allocation problem without transfers. Finally, Section \ref{sec:conclusion} offers concluding remarks. All omitted results and proofs are presented in the Appendix.

\section{Model and Notion of Efficiency}
\label{sec:model}

We consider a finite game with incomplete information  with $k \geq 2$ players.   The state of the world is drawn from a finite set $\Omega$ according to a common prior $p \in \mathrm{int}(\Delta \Omega)$.   Each player $i \in \{1,\dots,k\}$ has a finite set of actions $A_i$, and we  write $A = \prod_{i=1}^k A_i$ for the set of pure action profiles.  Each player $i$ has private information represented by a type $t_i \in T_i$, where $T_i$ is finite.  Let $T=T_1 \times \ldots \times T_k$ denote the set of type profiles with distribution $\pi:\Omega \rightarrow \Delta T$.  Each player $i$ has a payoff function $u_i : \Omega \times A \to \mathbb{R}$, and we assume the collection of payoffs $(u_1,\dots,u_k)$ is \emph{generic}, meaning that the properties we establish hold for all bounded payoffs except on a subset of Lebesgue measure zero.

An \emph{outcome} is a mapping  $\mu : \Omega \to \Delta A,$ which assigns to each state $\omega \in \Omega$ a probability distribution $\mu(\cdot \mid \omega)$ over action profiles.  Equivalently, one can think of a mediator who observes the realized state and recommends  an action  (possibly at random) to the players, which they follow. Crucially, the recommendation need not be incentive compatible; we evaluate efficiency relative to the set of all feasible outcomes.

The \emph{payoff vector} induced by outcome $\mu$ under prior $p$ is
\begin{equation}
	u(\mu) :=\sum_{\omega \in \Omega} p(\omega)\sum_{a \in A} 
	\mu(a \mid \omega)\,\big(u_1(\omega,a), \dots, u_k(\omega,a)\big). 
\end{equation}

The set of feasible payoff vectors given prior $p$ is defined as
\begin{equation}
	F_p:=\{\overline{u}\in\mathbb{R}^k: \overline{u}= u(\mu),\ \text{for some outcome $\mu:\Omega \rightarrow \Delta A$} \}.
\end{equation}

The set $F_p$ is a convex polytope whose extreme points correspond to pure or deterministic recommendations.

\begin{definition}
	Given a compact convex set of feasible payoffs $F\subseteq \mathbb{R}^k$, a vector $\overline{u} \in F$ is efficient if there does not exist another feasible payoff vector $\overline{v}\in F$ such that $\overline{v} \geq \overline{u}$, with a strict inequality for at least one component.
\end{definition}

In our setting, a feasible payoff vector is efficient if and only if it maximizes a strictly positive weighted sum of the players' payoffs.\footnote{This equivalence fails for general set of feasible payoff vectors but can be approximated by “near” weighted sum of the players' payoffs, as shown in \cite{che2024near}. In our case, the equivalence holds  because the set of feasible payoffs is a convex polytope.} Moreover, we can identify the set of feasible payoffs with the set of outcomes. Thus, we can analyze efficiency in terms of outcomes instead of payoff vectors.

\begin{definition}
	An outcome $\mu: \Omega \rightarrow \Delta A$ is   \textbf{efficient} if  $u(\mu)$ is efficient with respect to $F_p$. 
\end{definition}

Given an outcome $\mu$, we refer to $\mu(\omega) \in \Delta A$ and $u(\mu \mid \omega) \in \mathbb{R}^k$ as the state-contingent outcome and state-contingent payoff vector, respectively.   Let 
\begin{equation}
	F_\omega := \mathrm{Co}\bigl\{(u_1(\omega, a), \ldots, u_k(\omega, a)) : a \in A\bigr\}
\end{equation}
denote the feasible payoff vectors in state $\omega$.\footnote{where $\mathrm{Co}(A)$ stands for the convex hull of set $A$.}   Any  outcome can be decomposed in terms of its state-contingent outcomes.  This follows as the set of the feasible payoff vectors given a prior can be written as a unique Minkowski sum of the  set of the feasible payoff vectors for each state:\footnote{ The Minkowski sum of two sets $A$ and $B$ is given by $A+B= \{ a+ b \mid a \in A, b \in B \}$.} 
\begin{equation}   
	F_p = \sum_{\omega \in \Omega} p(\omega) F_\omega .
\end{equation}

Our notion of efficiency is based on the ex-ante perspective, that is, before the state is realized. However, efficiency can also be evaluated ex-post, once the state is realized. An outcome $\mu$ is ex-post efficient in state $\omega$ if its induced outcome $\mu(\omega)$ is efficient with respect to the  set of feasible payoff vectors in that state.\footnote{There is not a unique way to define efficiency in environments with incomplete information. In particular, \citet{holmstrom1983efficient} propose several notions of efficiency based on timing and feasibility. Our notion corresponds to what they term \emph{ex-ante classically efficient}, while ex-post efficiency corresponds to their \emph{ex-post classically efficient}.}

\begin{definition}
	An outcome $ \mu: \Omega \rightarrow \Delta A$ is \textbf{ex-post efficient} in state $ \omega$ if $ u(\mu \mid \omega) $ is efficient with respect to $ F_\omega $.
\end{definition}

Efficiency implies ex-post efficiency but the converse does not hold. We establish a geometric relationship between ex-ante and ex-post efficiency.  An outcome is ex-post efficient in a given state if it maximizes some strictly positive weighted sum of the players’ payoffs in that state. The weights, however, may differ across states.  In contrast, ex-ante efficiency requires that the outcome maximize a \emph{common} strictly positive weighted sum of the players’ payoffs across \emph{all} states.

\begin{proposition}\label{prop:exanteexpost}
	An outcome $\mu$ is efficient if and only if there exists a  positive weight $n \in \mathbb{R}_{++}^k$ such that, for all $\omega \in \Omega$,  the payoff $u(\mu \mid \omega)$ maximizes the common positive linear functional $n^T x$ over all $x \in F_\omega$.
\end{proposition}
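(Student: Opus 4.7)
The plan is to combine two ingredients that are already in hand: the Minkowski decomposition $F_p = \sum_{\omega} p(\omega) F_\omega$, and the polytope characterization (noted immediately after the definition of efficiency) that a point of the convex polytope $F_p$ is Pareto efficient if and only if it maximizes some strictly positive linear functional $n^T x$ on $F_p$. The backbone of both directions is that maximizing a fixed $n^T x$ over a Minkowski sum is equivalent to maximizing it over each summand separately, since the state-contingent distributions $\mu(\omega)$ can be chosen independently across $\omega$.

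For the \emph{only if} direction, suppose $\mu$ is efficient and, via the polytope characterization, pick $n \in \mathbb{R}^k_{++}$ with $n^T u(\mu) \ge n^T u(\mu')$ for every feasible outcome $\mu'$. Fix a state $\omega$ and any $\nu \in \Delta A$, and consider the outcome $\mu'$ that coincides with $\mu$ at every state other than $\omega$ and plays $\nu$ at $\omega$. Substituting into the definition of $u(\mu')$ and canceling the common terms $\sum_{\omega' \neq \omega} p(\omega')\, n^T u(\mu \mid \omega')$ reduces the ex-ante inequality to $n^T u(\mu \mid \omega) \ge n^T u(\nu \mid \omega)$. Since $\omega$ and $\nu$ are arbitrary, $u(\mu \mid \omega)$ maximizes $n^T x$ over $F_\omega$ in \emph{every} state, with the \emph{same} $n$.

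For the \emph{if} direction, fix $n$ with the stated property and let $\mu'$ be any outcome. Then
\[ n^T u(\mu) = \sum_{\omega} p(\omega)\, n^T u(\mu \mid \omega) \;\ge\; \sum_{\omega} p(\omega)\, n^T u(\mu' \mid \omega) = n^T u(\mu'), \]
where the inequality is term-wise, using $u(\mu' \mid \omega) \in F_\omega$. Hence $u(\mu)$ maximizes the strictly positive functional $n^T x$ over $F_p$, so $\mu$ is efficient by the polytope characterization. The only delicate point is insisting on a \emph{common} $n$ across all states rather than a state-specific $n_\omega$---this is precisely what distinguishes ex-ante from ex-post efficiency, and it is exactly what the polytope-plus-Minkowski argument delivers. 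I do not expect any deeper geometric obstacle; invoking the polytope characterization of Pareto efficiency (rather than a weak version with nonnegative weights) is the one nontrivial ingredient, and it is already asserted in the text.
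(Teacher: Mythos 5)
Your proof is correct and follows essentially the same route as the paper: both arguments rest on the strictly-positive-functional characterization of Pareto efficiency for the polytope $F_p$ combined with the fact that maximizing a fixed $n^T x$ over the Minkowski sum $F_p = \sum_{\omega} p(\omega) F_\omega$ is equivalent to maximizing it over each $F_\omega$ separately. The only difference is that the paper cites this Minkowski-sum maximizer identity from \cite{fukuda2004zonotope}, whereas you verify it directly via the one-state-swap argument (which implicitly uses $p(\omega)>0$, guaranteed since $p \in \mathrm{int}(\Delta\Omega)$).
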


We illustrate the notions of ex-ante  and ex-post efficiency in the following  example. 

\begin{example} 
	\label{example}
	
	Consider a game with  two players: the sender and the receiver. The  state space is $\Omega=\{\omega_0,\omega_1 \}$ and the receiver's action space is $A=\{a_0,a_1,a_2,a_3, a_4 \}$.  The sender's and the receiver's payoffs are given by the following matrix:
	
	\begin{table}[h!]
		\centering
		\setlength{\extrarowheight}{2pt}
		\begin{tabular}{cc|c|c|c|c|c|}
			& \multicolumn{1}{c}{}  & \multicolumn{1}{c}{$a_0$} & \multicolumn{1}{c}{$a_1$}  & \multicolumn{1}{c}{$a_2$} & \multicolumn{1}{c}{$a_3$}  & \multicolumn{1}{c}{$a_4$} \\\cline{3-7}
			& $\omega_0$  & $(2,9)$ &  $(10,8)$ & $(0,6.4)$ &  $(3,4)$ & $(1,0)$ \\\cline{3-7}
			& $\omega_1$ & $(2,0)$ & $(10,4)$ & $(0,6.4)$ &  $(3,8)$  & $(1,9) $\\\cline{3-7}
		\end{tabular}
	\end{table}

	We analyze the efficiency of three pairs of outcomes and priors $p = \mathbb{P}(\omega_1)$: 
	\begin{enumerate}[label=(\alph*)]
		\item  $ p = 0.10 $: an outcome where actions $ a_0 $ and $ a_1 $ are taken in $ \omega_0 $ and action $ a_1 $ is taken in $ \omega_1 $;
		\item  $ p = 0.30 $: an outcome where action $ a_1 $ is taken with certainty in both states;
		\item $ p = 0.70 $: an outcome where action $ a_1 $ is taken in $ \omega_0 $, and actions $ a_1 $ and $ a_4 $ are taken in $ \omega_1 $.
	\end{enumerate}

	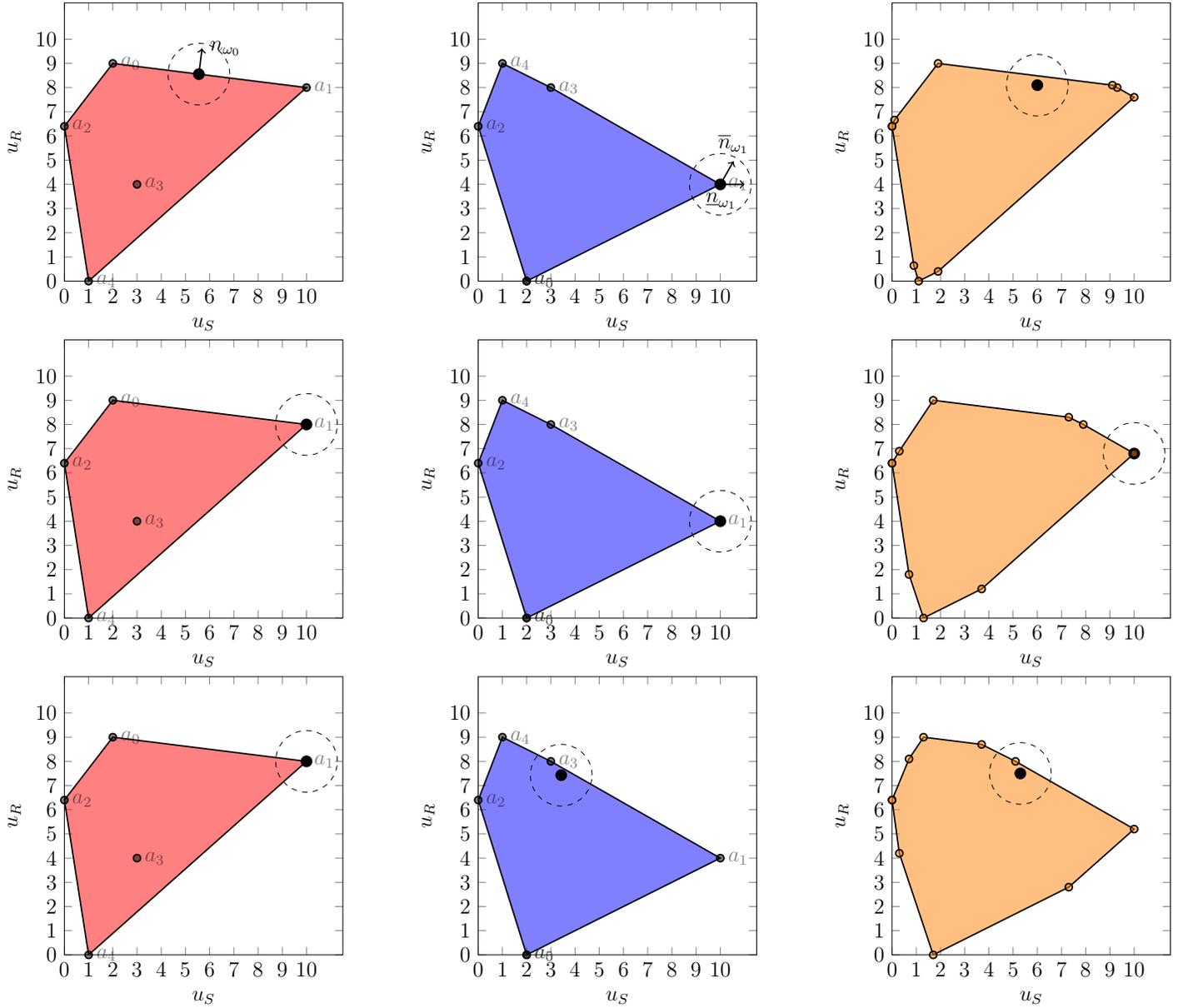
\begin{figure}[htp!]
		\hspace*{-1.3cm}  
		\centering
		\begin{minipage}{0.40\textwidth}
			\centering
			\begin{tikzpicture}[scale=0.82]
				\begin{axis}[
					xmax=11.5,
					xmin=0,
					ymax=11.5,
					ymin=0,
					ylabel near ticks,
					xlabel near ticks,
					xlabel={$u_S$},
					ylabel={$u_R$},
					xtick={0,1,2,3,4,5,6,7,8,9,10},
					ytick={0,1,2,3,4,5,6,7,8,9,10},
					width=7cm,
					height=7cm
					]
					\addplot[thick, black, fill=red, fill opacity=0.50] coordinates {
						(0,6.4)
						(2,9) 
						(10,8) 
						(1,0)
						(0,6.4)
					};
					
					\addplot[thick, black, mark=*, only marks,fill opacity=0.50, nodes near coords={$a_{\coordindex}$}, nodes near coords align={west}] coordinates {
						(2,9) 
						(10,8) 
						(0,6.4)
						(3,4)
						(1,0)
					};
					\draw[dashed] (50/9,77/9) circle(.6cm);
					\draw[thick, fill=black] (50/9,77/9) circle(.1cm);

					\draw[thick,  black, ->] (50/9,77/9) -- (50/9+1/9*1.2,77/9+8/9*1.2)  node[right] {$n_{\omega_0}$};
					
				\end{axis}
			\end{tikzpicture}
		\end{minipage}%
		\begin{minipage}{0.40\textwidth}
			\centering
			\begin{tikzpicture}[scale=0.82]
				\begin{axis}[
					xmax=11.5,
					xmin=0,
					ymax=11.5,
					ymin=0,
					ylabel near ticks,
					xlabel near ticks,
					xlabel={$u_S$},
					ylabel={$u_R$},
					xtick={0,1,2,3,4,5,6,7,8,9,10},
					ytick={0,1,2,3,4,5,6,7,8,9,10},
					width=7cm,
					height=7cm
					]
					\addplot[thick, black, fill=blue, fill opacity=0.50] coordinates {
						(0,6.4)
						(2,0) 
						(10,4) 
						(3,8)
						(1,9)
						(0,6.4)
					};
					
					\addplot[thick, black, mark=*, only marks,fill opacity=0.50, nodes near coords={$a_{\coordindex}$}, nodes near coords align={west}] coordinates {
						(2,0) 
						(10,4) 
						(0,6.4)
						(3,8)
						(1,9)
						(2,0)
					};
					
					\draw[dashed] (10,4) circle(.6cm);
					\draw[thick, fill=black] (10,4) circle(.1cm);
					
					\draw[thick, black, ->] (10,4)  -- (10+1,4)   node[below left ] {$\underline{n}_{\omega_1}$}; 
					
					\draw[thick, black, ->] (10,4)  -- (10+1.5*4/11,4+1.5*7/11)  node[above ] {$\overline{n}_{\omega_1}$};

				\end{axis}
			\end{tikzpicture}
		\end{minipage}%
		\begin{minipage}{0.40\textwidth}
			\centering
			\begin{tikzpicture}[scale=0.82]
				\begin{axis}[
					xmax=11.5,
					xmin=0,
					ymax=11.5,
					ymin=0,
					ylabel near ticks,
					xlabel near ticks,
					xlabel={$u_S$},
					ylabel={$u_R$},
					xtick={0,1,2,3,4,5,6,7,8,9,10},
					ytick={0,1,2,3,4,5,6,7,8,9,10},
					width=7cm,
					height=7cm
					]

					\addplot[thick, black,  mark=*, fill=orange, fill opacity=0.50]
					coordinates{
						(0.0,6.4) (0.9,0.64) (1.1,0.0) (1.9,0.4)
						(10.0,7.6) (9.3,8.0) (9.1,8.1) (1.9,9.0)
						(0.1,6.66) (0.0,6.4)
					};
					\draw[dashed] (6,8.1) circle(.6cm);
					\draw[thick, fill=black] (6,8.1) circle(.1cm);
				\end{axis}
			\end{tikzpicture}
		\end{minipage}
		
		\par 
				\hspace*{-1.3cm}  
		\begin{minipage}{0.40\textwidth}
			\centering
			\begin{tikzpicture}[scale=0.82]
				\begin{axis}[
					xmax=11.5,
					xmin=0,
					ymax=11.5,
					ymin=0,
					ylabel near ticks,
					xlabel near ticks,
					xlabel={$u_S$},
					ylabel={$u_R$},
					xtick={0,1,2,3,4,5,6,7,8,9,10},
					ytick={0,1,2,3,4,5,6,7,8,9,10},
					width=7cm,
					height=7cm
					]
					\addplot[thick, black, fill=red, fill opacity=0.50] coordinates {
						(0,6.4)
						(2,9) 
						(10,8) 
						(1,0)
						(0,6.4)
					};
					
					\addplot[thick, black, mark=*, only marks,fill opacity=0.50, nodes near coords={$a_{\coordindex}$}, nodes near coords align={west}] coordinates {
						(2,9) 
						(10,8) 
						(0,6.4)
						(3,4)
						(1,0)
					};
					\draw[dashed] (10,8) circle(.6cm);
					\draw[thick, fill=black] (10,8) circle(.1cm);

				\end{axis}
			\end{tikzpicture}
		\end{minipage}%
		\begin{minipage}{0.40\textwidth}
			\centering
			\begin{tikzpicture}[scale=0.82]
				\begin{axis}[
					xmax=11.5,
					xmin=0,
					ymax=11.5,
					ymin=0,
					ylabel near ticks,
					xlabel near ticks,
					xlabel={$u_S$},
					ylabel={$u_R$},
					xtick={0,1,2,3,4,5,6,7,8,9,10},
					ytick={0,1,2,3,4,5,6,7,8,9,10},
					width=7cm,
					height=7cm
					]
					\addplot[thick, black, fill=blue, fill opacity=0.50] coordinates {
						(0,6.4)
						(2,0) 
						(10,4) 
						(3,8)
						(1,9)
						(0,6.4)
					};
					
					\addplot[thick, black, mark=*, only marks,fill opacity=0.50, nodes near coords={$a_{\coordindex}$}, nodes near coords align={west}] coordinates {
						(2,0) 
						(10,4) 
						(0,6.4)
						(3,8)
						(1,9)
						(2,0)
					};
					
					\draw[dashed] (10,4) circle(.6cm);
					\draw[thick, fill=black] (10,4) circle(.1cm);
					
				\end{axis}
			\end{tikzpicture}
		\end{minipage}%
		\begin{minipage}{0.40\textwidth}
			\centering
			\begin{tikzpicture}[scale=0.82]
				\begin{axis}[
					xmax=11.5,
					xmin=0,
					ymax=11.5,
					ymin=0,
					ylabel near ticks,
					xlabel near ticks,
					xlabel={$u_S$},
					ylabel={$u_R$},
					xtick={0,1,2,3,4,5,6,7,8,9,10},
					ytick={0,1,2,3,4,5,6,7,8,9,10},
					width=7cm,
					height=7cm
					]
					
					\addplot[thick, black, fill=orange, mark=*, fill opacity=0.50]
					coordinates{
						(0.0,6.4) (0.7,1.8) (1.3,0.0) (3.7,1.2)
						(10.0,6.8) (7.9,8.0) (7.3,8.3) (1.7,9.0)
						(0.3,6.9) (0.0,6.4)
					};
					
					\draw[dashed] (10,6.8) circle(.6cm);
					\draw[thick, fill=black] (10,6.8) circle(.1cm);
				\end{axis}
			\end{tikzpicture}
		\end{minipage}
		
		\par
				\hspace*{-1.3cm}  
		\begin{minipage}{0.40\textwidth}
			\centering
			\begin{tikzpicture}[scale=0.82]
				\begin{axis}[
					xmax=11.5,
					xmin=0,
					ymax=11.5,
					ymin=0,
					ylabel near ticks,
					xlabel near ticks,
					xlabel={$u_S$},
					ylabel={$u_R$},
					xtick={0,1,2,3,4,5,6,7,8,9,10},
					ytick={0,1,2,3,4,5,6,7,8,9,10},
					width=7cm,
					height=7cm
					]
					\addplot[thick, black, fill=red, fill opacity=0.50] coordinates {
						(0,6.4)
						(2,9) 
						(10,8) 
						(1,0)
						(0,6.4)
					};
					
					\addplot[thick, black, mark=*, only marks,fill opacity=0.50, nodes near coords={$a_{\coordindex}$}, nodes near coords align={west}] coordinates {
						(2,9) 
						(10,8) 
						(0,6.4)
						(3,4)
						(1,0)
					};
					\draw[dashed] (10,8) circle(.6cm);
					\draw[thick, fill=black] (10,8) circle(.1cm);

				\end{axis}
			\end{tikzpicture}
		\end{minipage}%
		\begin{minipage}{0.40\textwidth}
			\centering
			\begin{tikzpicture}[scale=0.82]
				\begin{axis}[
					xmax=11.5,
					xmin=0,
					ymax=11.5,
					ymin=0,
					ylabel near ticks,
					xlabel near ticks,
					xlabel={$u_S$},
					ylabel={$u_R$},
					xtick={0,1,2,3,4,5,6,7,8,9,10},
					ytick={0,1,2,3,4,5,6,7,8,9,10},
					width=7cm,
					height=7cm
					]
					\addplot[thick, black, fill=blue, fill opacity=0.50] coordinates {
						(0,6.4)
						(2,0) 
						(10,4) 
						(3,8)
						(1,9)
						(0,6.4)
					};
					
					\addplot[thick, black, mark=*, only marks,fill opacity=0.50, nodes near coords={$a_{\coordindex}$}, nodes near coords align={west}] coordinates {
						(2,0) 
						(10,4) 
						(0,6.4)
						(3,8)
						(1,9)
						(2,0)
					};
					\draw[dashed] (24/7,52/7) circle(.6cm);
					\draw[thick, fill=black] (24/7,52/7) circle(.1cm);

				\end{axis}
			\end{tikzpicture}
		\end{minipage}%
		\begin{minipage}{0.40\textwidth}
			\centering
			\begin{tikzpicture}[scale=0.82]
				\begin{axis}[
					xmax=11.5,
					xmin=0,
					ymax=11.5,
					ymin=0,
					ylabel near ticks,
					xlabel near ticks,
					xlabel={$u_S$},
					ylabel={$u_R$},
					xtick={0,1,2,3,4,5,6,7,8,9,10},
					ytick={0,1,2,3,4,5,6,7,8,9,10},
					width=7cm,
					height=7cm
					]
					
					\addplot[thick, black, fill=orange, mark=*, fill opacity=0.50]
					coordinates{
						(0.0,6.4) (0.3,4.2) (1.7,0.0) (7.3,2.8)
						(10.0,5.2) (5.1,8.0) (3.7,8.7) (1.3,9.0)
						(0.7,8.1) (0.0,6.4)
					};
					
					
					\draw[dashed] (5.3,7.5) circle(.6cm);
					\draw[thick, fill=black] (5.3, 7.5) circle(.1cm);
				\end{axis}
			\end{tikzpicture}
		\end{minipage}
		\caption{Each horizontal panel represents a different prior:  (a)~$p=0.10$, (b)~$p=0.30$, and (c)~$p=0.70$.  Within each panel, the red region represents $F_{\omega_0}$, the blue region represents $F_{\omega_1}$, and the orange region represents $F_p$. The black circled node denotes the outcome.}
		\label{fig:efficiency}
	\end{figure}

	The feasible payoff vectors for states $\omega_0$, $\omega_1$, and the prior $p$ are represented by the red, blue, and orange regions in \Cref{fig:efficiency}, respectively. We find that for (a), the outcome is ex-post efficient in both states but is not supported by a common positive normal. The unique normal $n_{\omega_0}$ does not belong to the normal cone spanned by $\overline{n}_{\omega_1}$ and ${\underline{n}}_{\omega_1}$, i.e., $n_{\omega_0} \notin \mathrm{cone}\{\overline{n}_{\omega_1}, {\underline{n}}_{\omega_1}\}$. For (b), the outcome is ex-post efficient in both states and supported by a common positive normal, while for (c), the outcome is not ex-post efficient in state $\omega_1$. Overall, the outcome is Pareto efficient in (b) but not in (a) or (c) (see the respective polytopes $F_p$ in \Cref{fig:efficiency}). As we will later see, these  correspond to the equilibrium outcomes of Bayesian persuasion for the respective priors.

\end{example}

\section{Necessary Condition for Efficiency}
\label{sec:mainresult}

Our main result provides a necessary condition for efficiency based on the number of action profiles taken across states. Given an outcome $\mu$ and a state $\omega$ let $|\mu(\omega)|$ denote the size of the support of $\mu(\omega) \subseteq A$, namely the number of action profiles that are taken with positive probability in that state. 

Generically, efficiency requires that the total number of action profiles taken across all states be strictly less than the sum of the number of players and states; excessive randomization therefore leads to inefficiency. Efficiency implies that the state-contingent outcomes must maximize a \emph{common}  positive weighted sum of the players’ payoffs across all states. When too many action profiles are taken across states, no such  weight can exist generically.

\begin{theorem}
	\label{necessary}
	Generically,  an outcome $\mu:\Omega \to \Delta A$ is efficient only if
	\begin{equation} \label{efficiency:number}
		\sum_{\omega \in \Omega} |\mu(\omega)| < k + |\Omega| .
	\end{equation}
\end{theorem}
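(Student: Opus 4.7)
The plan is to prove the contrapositive via \Cref{prop:exanteexpost}. Suppose $\sum_{\omega\in\Omega}|\mu(\omega)|\geq k+|\Omega|$. By that proposition, efficiency of $\mu$ would require a common weight $n\in\mathbb{R}_{++}^k$ such that, in every state $\omega$, $\mu(\omega)$ maximizes $n^Tx$ over $F_\omega$. Because a convex combination over the support $S_\omega:=\mathrm{supp}(\mu(\omega))$ achieves the maximum only when each atom does, this is equivalent to requiring the scalar $n^T u(\omega,a)$ to be constant on $a\in S_\omega$. Fixing a reference action $a_\omega^{*}\in S_\omega$ for each state, we obtain the homogeneous linear system
\begin{equation*}
n^T\bigl(u(\omega,a)-u(\omega,a_\omega^{*})\bigr)=0,\qquad \omega\in\Omega,\ a\in S_\omega\setminus\{a_\omega^{*}\},
\end{equation*}
whose total number of equations is $\sum_{\omega}(|\mu(\omega)|-1)=\sum_\omega|\mu(\omega)|-|\Omega|\geq k$ in the $k$ unknowns $n_1,\dots,n_k$.

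Next I would argue that, generically in the payoffs, these $\geq k$ difference vectors span $\mathbb{R}^k$, which is incompatible with any nonzero $n$ solving the system. Fix any candidate support pattern $(S_\omega)_{\omega\in\Omega}$ and any choice of $k$ triples $(\omega_j,a_j,a_{\omega_j}^{*})$ from the equations above. The set of payoff profiles for which the corresponding $k$ difference vectors in $\mathbb{R}^k$ are linearly dependent is the vanishing locus of a single $k\times k$ determinant polynomial in the payoff entries. This polynomial is not identically zero—one can easily exhibit payoffs making the determinant nonzero, e.g.\ by choosing the relevant $u(\omega_j,a_j)-u(\omega_j,a_{\omega_j}^{*})$ equal to standard basis vectors of $\mathbb{R}^k$—so its zero set has Lebesgue measure zero. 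Because there are only finitely many support patterns $(S_\omega)_\omega\subseteq\prod_\omega 2^A$ and finitely many $k$-tuples of equations within each pattern, the union of all exceptional sets remains null. Off this null set the linear system forces $n=0$, contradicting $n\in\mathbb{R}_{++}^k$, so $\mu$ is inefficient.

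The main obstacle I expect is tightening the genericity bookkeeping: one must verify that the determinant polynomial is nontrivial for every combinatorial configuration, and then aggregate the finitely many null sets into a single universal null set of payoff profiles that works simultaneously for all outcomes $\mu$ violating the counting bound. Once this is done, the remainder of the argument is a clean combination of \Cref{prop:exanteexpost} with elementary linear algebra—no further convex-geometric or incentive-compatibility considerations are needed, since \Cref{prop:exanteexpost} has already reduced efficiency to the existence of a common positive supporting normal.
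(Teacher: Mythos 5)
Your proposal is correct and follows essentially the same route as the paper: reduce efficiency to the existence of a common strictly positive normal via \Cref{prop:exanteexpost}, observe that the support of $\mu(\omega)$ imposes $|\mu(\omega)|-1$ homogeneous linear indifference constraints on $n$, and note that when the total number of constraints reaches $k$ the difference vectors are generically linearly independent, forcing $n=\mathbf{0}$ and hence no admissible weight in $\mathbb{R}^k_{++}$. Your explicit determinant-polynomial and finite-union bookkeeping simply makes precise the genericity step that the paper states informally as a dimension count on the normal cones.
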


\begin{proof}
	
	An outcome $\mu$ is efficient if and only if it is ex-post efficient in every state and supported by the same vector of strictly positive weights on players’ payoffs across all states. Let $a_1, \dots, a_{|\mu(\omega)|}$ denote the pure action profiles taken under outcome $\mu$ in state $\omega$.
	
	For $\mu$ to be ex-post efficient in state $\omega$, there must exist a strictly positive weight vector $n \in \mathbb{R}^k_{++}$ such that the induced payoff vector $u(\mu \mid \omega)$ maximizes the weighted sum $n \cdot u(\omega,a)$ among all feasible action profiles in that state. When the outcome mixes several action profiles $a_1, \dots, a_{|\mu(\omega)|}$, ex-post efficiency requires that all of them yield  the same weighted sum  under $n$:
	\begin{equation}
		n \cdot \big(u(\omega,a_i) - u(\omega,a_1)\big) = 0 
		\quad \text{for } i = 2, \dots, |\mu(\omega)|.
	\end{equation}
	
	Generically, each additional action profile used introduces one additional independent linear constraint on the set of admissible weights. Let $N_\omega(\mu)$ denote the outer normal cone of the outcome $\mu$ in state $\omega$, that is, the set of strictly positive vectors $n \in \mathbb{R}^k_{++}$ for which $u(\mu \mid \omega)$ maximizes $n^T u(\omega,a)$ over $F_\omega$. Thus, generically, this set has dimension $k - (|\mu(\omega)| - 1)$.

	Efficiency requires that there exists a vector $n$ that belongs to the outer normal cones of all states. Generically, the dimension of the intersection of these cones is
	\begin{equation}
		\dim\!\Big(\bigcap_{\omega \in \Omega} N_\omega(\mu)\Big)=k -\sum_{\omega \in \Omega} (|\mu(\omega)|-1) = k + |\Omega| - \sum_{\omega \in \Omega} |\mu(\omega)|.
	\end{equation}
	If  $	\sum_{\omega \in \Omega} |\mu(\omega)| \ge k + |\Omega|$,	then this dimension is non-positive, implying that no common strictly positive vector $n$ can support all state-contingent outcomes. 
\end{proof}

\begin{remark}
	The aggregate bound implies a state-wise bound: $|\mu(\omega)| \;\leq\; k$ for all $\omega \in \Omega$. Generically, an outcome  is ex-post efficient in a state only if  the number of action profiles taken in that state is weakly less than the number of players. 
\end{remark}

\begin{remark}
	For two-player games, outcomes can be pure (a deterministic action profile in every state), quasi-pure (a deterministic action profile in all but one state, where two action profiles are taken), or mixed (any other case). Generically, an outcome is efficient only if it is pure or quasi-pure.
\end{remark}


The necessary condition in \Cref{necessary} is far from sufficient. For instance, in \Cref{example}, the outcome is quasi-pure (satisfying the bound) in cases (a) and (c), yet it is still inefficient. Even if all state-contingent outcomes are pure and ex-post efficient, the overall outcome can still be inefficient. Efficiency is guaranteed  when a player’s most preferred outcome that is also ex-post efficient in every state is induced. In \Cref{example}, this occurs when the sender fully reveals the state and the receiver takes his optimal action in each state, or when the sender’s preferred action is chosen in every state, as in case (b).

A complete characterization of efficiency is provided in \Cref{propdeviations} in the Appendix. It offers a simple way to determine efficiency directly from the payoff functions. Although we take an ex-ante view, efficiency can be assessed by examining payoff changes from deviations in each state. An outcome is efficient if and only if no convex combination of these deviations across states makes all players weakly better off and at least one strictly better off. In short, no Pareto improvement is possible. Like \Cref{necessary}, this condition is independent of both the interior prior and the  weights of randomization.

\paragraph{Type-contingent decision rules} Following \cite{bergemann2019information},  outcomes can be viewed as the result of type-contingent decision rules $\sigma : T \times \Omega \to \Delta A$,  where action profiles depend on both the type profile and  the state.  Since payoffs depend only on states and the joint actions, the type profile and its distribution do not change the set of feasible payoff vectors.  Type profiles become relevant only when incentive compatibility constraints are imposed.  

A \emph{Bayes correlated equilibrium (BCE)} is a   type-contingent  decision rule that satisfies the players' obedience constraints.  A \emph{Bayesian Nash equilibrium (BNE)} is a particular BCE in which each player  randomizes over actions as a function of his type ($\sigma_i: T_i \to \Delta A_i$).  If the distribution of type profiles $\pi : \Omega \to \Delta T$ has full support  (assigns positive probability to every type at every state), then even pure decision rules  can induce  mixed action profiles across states.\footnote{In a generic game, both the prior $p$  and the type distribution $\pi$ have full support.}

\begin{corollary} \label{cor:typecontingent}
	Assume $\pi: \Omega \to \Delta T$ has full support. If either
	\begin{enumerate}
		\item[\emph{(i)}] $\sigma:T \times \Omega \to A$ and there are at least $k$ states   with types $t,t'\in T$ such that $\sigma(t,\omega)\neq\sigma(t',\omega)$, or
		\item[\emph{(ii)}] $\sigma_i:T_i \to A_i$ for all $i=1,\ldots,k$ and at least $q$ players use two or more actions  across their types where $|\Omega|\,(2^q-1) \geq k$,
	\end{enumerate}
	then generically the induced outcome $\mu$ is  inefficient.
\end{corollary}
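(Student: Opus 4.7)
The plan is to reduce both parts of the corollary directly to \Cref{necessary}. In each case I will compute a lower bound on the support size $|\mu(\omega)|$ of the induced outcome, show that it forces $\sum_{\omega}|\mu(\omega)|\ge k+|\Omega|$, and conclude inefficiency by contraposition of the theorem. The crucial general observation, used in both cases, is that full support of $\pi$ implies
\[
\mathrm{supp}\,\mu(\cdot\mid\omega)=\{\sigma(t,\omega):t\in T\}\qquad(\text{resp.}\ \{(\sigma_1(t_1),\dots,\sigma_k(t_k)):t\in T\}),
\]
so the support of $\mu(\cdot\mid\omega)$ coincides with the image of the decision rule at $\omega$; no cancellation can reduce it.

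For case (i), by hypothesis there is a set $\Omega^\ast\subseteq\Omega$ with $|\Omega^\ast|\ge k$ on which some pair $t,t'$ yields $\sigma(t,\omega)\ne\sigma(t',\omega)$. Combined with the observation above this gives $|\mu(\omega)|\ge 2$ for $\omega\in\Omega^\ast$ and trivially $|\mu(\omega)|\ge 1$ otherwise, so
\[
\sum_{\omega\in\Omega}|\mu(\omega)|\;\ge\;2|\Omega^\ast|+(|\Omega|-|\Omega^\ast|)\;=\;|\Omega|+|\Omega^\ast|\;\ge\;|\Omega|+k,
\]
violating the bound \eqref{efficiency:number}. For case (ii), since $T=\prod_i T_i$ and each $\sigma_i$ depends only on $t_i$, the image $\{(\sigma_1(t_1),\dots,\sigma_k(t_k)):t\in T\}$ is exactly the Cartesian product $\prod_i\sigma_i(T_i)$, whose cardinality is $\prod_i|\sigma_i(T_i)|\ge 2^q$ because at least $q$ of the factors are $\ge 2$. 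Hence $|\mu(\omega)|\ge 2^q$ in every state, and
\[
\sum_{\omega\in\Omega}|\mu(\omega)|\;\ge\;|\Omega|\cdot 2^q\;=\;|\Omega|+|\Omega|(2^q-1)\;\ge\;|\Omega|+k
\]
by the hypothesis $|\Omega|(2^q-1)\ge k$. Applying \Cref{necessary} to the (generic) payoffs, $\mu$ is inefficient in each case.

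The main subtlety I would guard against is the bookkeeping in case (ii): one has to be sure that the support of $\mu(\cdot\mid\omega)$ is \emph{exactly} the product $\prod_i\sigma_i(T_i)$, which uses both the product structure of the type space and the full support of $\pi$ on $T$ at every $\omega$ so that every type-profile combination is realized with positive probability. A second point worth stating explicitly is that the genericity in \Cref{necessary} is over payoffs and the induced $\mu$ is a fixed object once $\sigma$ and $\pi$ are given, so the generic inefficiency conclusion transfers directly. Beyond these, the argument is a clean counting reduction with no further obstacles.
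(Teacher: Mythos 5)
Your proof is correct and is precisely the counting reduction to \Cref{necessary} that the paper intends (the paper states this corollary without a separate proof, treating it as immediate). Both cases are handled properly: full support of $\pi$ makes the support of $\mu(\cdot\mid\omega)$ equal to the image of the decision rule at $\omega$, and your bounds $\sum_\omega|\mu(\omega)|\ge|\Omega|+k$ and $\sum_\omega|\mu(\omega)|\ge|\Omega|\,2^q\ge|\Omega|+k$ exactly negate the strict inequality \eqref{efficiency:number}.
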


The corollary shows that BCE or BNE outcomes may even fail to be efficient  when players follow pure decision rules. Differences in types impose distinct incentive constraints, which can generate excessive randomization and lead to inefficiency. \cite{Rudov} show that within the set of BCEs, a BNE is extreme if and only if it is pure. In contrast, we show that efficiency may even fail  for such pure rules when outcomes are evaluated relative to all feasible payoffs rather than only the  set of equilibrium payoffs.

\section{Applications}
\label{sec:applications}

We now illustrate how the general framework and result apply to three canonical economic environments: cheap talk, Bayesian persuasion, and an allocation problem  without transfers.

\subsection{Cheap talk}

We now consider the cheap talk model introduced in \citet{crawford1982strategic}, which studies strategic communication between an informed sender and an uninformed receiver.  

 The timing is as follows. First, the state $\omega \in \Omega$ is drawn according to the common prior $p$. The sender then observes the realized state and chooses a message $m \in M$ to send. After observing the message, the receiver chooses an action $a \in A$. This results in payoffs $u_S(\omega,a)$ and $u_R(\omega,a)$ for the sender and receiver, respectively.  

The sender’s strategy is given by $\sigma:\Omega \rightarrow \Delta M$, where $M$ is a finite set of  messages. We assume there are at least as many messages as actions or states, i.e., $|M| \geq \max \{|A|, |\Omega|\}$.   The receiver’s strategy is given by $\tau:M \to \Delta A$. A strategy profile $(\sigma,\tau)$ induces an outcome $\mu:\Omega \rightarrow \Delta A$, specifying a probability distribution over actions for each state, where 
\begin{equation}
	\mu(a \mid \omega) = \sum_{m \in M} \sigma(m \mid \omega) \tau(a \mid m)  \text{ for all } \omega \in \Omega, a \in A.
\end{equation}

A strategy profile $(\sigma,\tau)$ is a \emph{Perfect Bayesian Equilibrium} (PBE) if the sender chooses messages that maximize his expected payoff in every state given the receiver’s strategy, the receiver chooses actions that maximize his expected payoff given his posterior belief after each message, and beliefs are updated according to Bayes’ rule wherever possible.

Unlike Bayesian persuasion, cheap talk exhibits a multiplicity of equilibria. In particular, a \emph{babbling equilibrium} always exists, in which no communication takes place and the receiver plays his best response to the prior.\footnote{Note that in the case of cheap talk, we cannot restrict attention to direct signaling policies, as some equilibria may require  randomizing between actions  for a given message.}

First, we show that, generically, a cheap talk outcome can be efficient only if it is pure. This is a stronger result than the one stated in \Cref{necessary}, as any stochastic outcome, including quasi-pure, is inefficient. 

\begin{proposition} \label{cheaptalk}
	Generically, a cheap talk outcome $\mu:\Omega \rightarrow \Delta A$ is  efficient only if it is pure.
\end{proposition}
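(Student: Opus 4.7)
The plan is to combine \Cref{necessary} with a case analysis of how non-purity can arise in a cheap talk equilibrium. Applied with $k=2$, \Cref{necessary} restricts any efficient outcome to be either pure or quasi-pure: deterministic in every state except one, where the support contains exactly two action profiles. It therefore suffices to show that a quasi-pure PBE outcome is generically inefficient. Fix such an outcome $\mu$ with $\mu(\omega) = \delta_{b(\omega)}$ for $\omega \ne \omega^*$ and $\mathrm{supp}(\mu(\omega^*)) = \{a_1, a_2\}$, and write $M^* = \{m : \sigma(m \mid \omega^*) > 0\}$ for the set of messages the sender uses in $\omega^*$.

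I would then split into two cases. \textit{Case A:} all $m \in M^*$ induce the same receiver response $\tau^*$, which is necessarily mixed over $\{a_1, a_2\}$. Quasi-purity forces that no $m \in M^*$ is sent in any other state---otherwise $\tau^*$ would contaminate $\mu(\omega)$ for some $\omega \ne \omega^*$ with both actions---so the posterior at every such $m$ equals $\delta_{\omega^*}$. Receiver optimality at this degenerate posterior then requires $u_R(\omega^*, a_1) = u_R(\omega^*, a_2)$. \textit{Case B:} there exist $m, m' \in M^*$ with $\tau(\cdot \mid m) \ne \tau(\cdot \mid m')$; since both distributions are supported in the two-element set $\{a_1, a_2\}$, sender indifference between $m$ and $m'$ in state $\omega^*$ forces $u_S(\omega^*, a_1) = u_S(\omega^*, a_2)$.

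The final step invokes \Cref{prop:exanteexpost}: efficiency requires strictly positive weights $(n_S, n_R)$ satisfying $n_S [u_S(\omega^*, a_1) - u_S(\omega^*, a_2)] + n_R [u_R(\omega^*, a_1) - u_R(\omega^*, a_2)] = 0$. Generically both payoff differences are nonzero, so whichever case holds, the surviving equation forces $n_S = 0$ or $n_R = 0$, contradicting strict positivity; hence $\mu$ fails ex-post efficiency in $\omega^*$ and cannot be efficient.

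The main obstacle is Case A. It is tempting to argue that non-purity in cheap talk must come from sender mixing over messages with distinct pure receiver responses---which would immediately yield sender indifference between actions---but receiver randomization on a single on-path message can also produce non-purity, and payoff genericity alone does not rule this out. One has to exploit the quasi-pure structure to localize the posterior at any such message to $\delta_{\omega^*}$ before invoking receiver optimality.
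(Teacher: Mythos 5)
Your proof is correct, and its core engine is the same as the paper's: reduce to at most two actions in a single state, extract one player's indifference between $a_1$ and $a_2$ from the equilibrium conditions, extract the other player's indifference from ex-post efficiency (via the common supporting weight of \Cref{prop:exanteexpost}), and observe that the conjunction of the two indifferences is a measure-zero condition on payoffs. Where you differ is the case decomposition. The paper splits on whether the \emph{sender's} signaling policy is stochastic or pure: in the stochastic case it gets state-$\omega^*$ indifferences exactly as you do in your Case B, but in the pure-policy case (receiver randomizing at some on-path message $m$) it works with indifference conditions in expectation under a possibly non-degenerate posterior $q_m$, and then needs an extra step to transfer inefficiency with respect to $q_m$ back to inefficiency with respect to the prior $p$. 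Your decomposition---front-loading the reduction to quasi-pure outcomes and then splitting on whether the messages used in $\omega^*$ all induce the same response---sidesteps that entirely: in your Case A, quasi-purity forces any message carrying a mixed response to be sent only in $\omega^*$, so the posterior is degenerate and the receiver's indifference is a condition on the pure state-$\omega^*$ payoffs directly, with no transfer argument needed. This is a modest but genuine streamlining; the one point worth tightening in your write-up is the genericity bookkeeping in the last step, since in Case A the receiver's indifference already places the payoffs in a measure-zero set, so the cleanest statement is that efficiency of a quasi-pure equilibrium outcome forces \emph{both} $u_S(\omega^*,a_1)=u_S(\omega^*,a_2)$ and $u_R(\omega^*,a_1)=u_R(\omega^*,a_2)$ for some triple $(\omega^*,a_1,a_2)$, a finite union of codimension-two sets.
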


For any stochastic cheap talk outcome, the sender’s equilibrium condition requires him to be indifferent between the  actions that are played in a given state. For the outcome to be ex-post efficient in that state,  the receiver must also be indifferent between these actions.  Hence, efficiency of any stochastic cheap talk outcome necessarily relies on knife-edge indifferences for both players. These  break under generic payoffs, implying that only pure outcomes can be efficient.

Recently, \cite{kamenica2024commitment} show that, generically, if the sender’s preferred cheap talk outcome is necessarily stochastic, then he values commitment. Hence, this  cheap talk outcome must be pure not only for efficiency but also for commitment to have no value.

Next, we consider the case where the sender's payoff is state-independent. \cite{lipnowski2020cheap} characterize the sender's preferred equilibrium using a belief-based approach. Define the receiver’s best responses given his belief $ p \in \Delta \Omega $ as $A^*(p) := \arg\max_{a \in A} \mathbb{E}_p[u_R(\omega, a)]$. The sender's value function 
\begin{equation}
	V(p) := \max_{a \in A^*(p)} \mathbb{E}_p[u_S(\omega, a)],
\end{equation}

represents the sender's expected payoff when the receiver, with belief $ p $, selects the sender's preferred best response.  They show that the  sender's preferred equilibrium  corresponds to the quasiconcave envelope  of the value function, evaluated at the prior, which we denote by  $\mathrm{Quasicav} \; V$. We  graphically illustrate this equilibrium for Example 1 in \Cref{fig:cheaptalk}.

\begin{figure}[htp!]
	\centering
	\begin{tikzpicture}[scale=1]
		\begin{axis}[%
			,xlabel= $p$
			,ylabel= $u_{S}$
			,axis x line = bottom,axis y line = left
			,ytick={0,2,4,6,8,10}
			,xtick={0,0.2,0.4,0.6,0.8,1}
			,ymax=11.5
			, xmax=1.2
			]
			\addplot+[const plot, MyBlue, no marks,  thick] coordinates {(0,2) (0.2,2) (0.2,10) (0.4,10) (0.4,0)   (0.6,0) (0.6,3) (0.8,3) (0.8,1) (1,1)} node[above,pos=.57,black] {};

			\addplot+[MyRed, dotted, no marks, line width=2pt, name path=segment1] coordinates {(0,2) (0.2,2)};
			\addplot+[MyRed, dotted, no marks, line width=2pt, name path=segment2] coordinates {(0.2,10) (0.4,10)};
			\addplot+[MyRed, dotted, no marks, line width=2pt, name path=segment3] coordinates {(0.4,3) (0.8,3)};
			\addplot+[MyRed, dotted, no marks, line width=2pt, name path=segment4] coordinates {(0.8,1) (1,1)};
			
			\node at (axis cs:0.1,2) [above] {$a_0$};
			
			\node at (axis cs:0.3,10) [above] {$a_1$};
			\node at (axis cs:0.5,0) [above] {$a_2$};
			
			\node at (axis cs:0.7,3) [above] {$a_3$};
			\node at (axis cs:0.9,1) [above] {$a_4$};

			\legend{$V$, $\mathrm{Quasicav}  \, V$}
		\end{axis}
	\end{tikzpicture}
	\caption{Cheap talk: The value function (blue solid) and its quasiconcave envelope (red dotted). Vertical lines show the jumps at cutoff beliefs.}
	\label{fig:cheaptalk}
\end{figure}
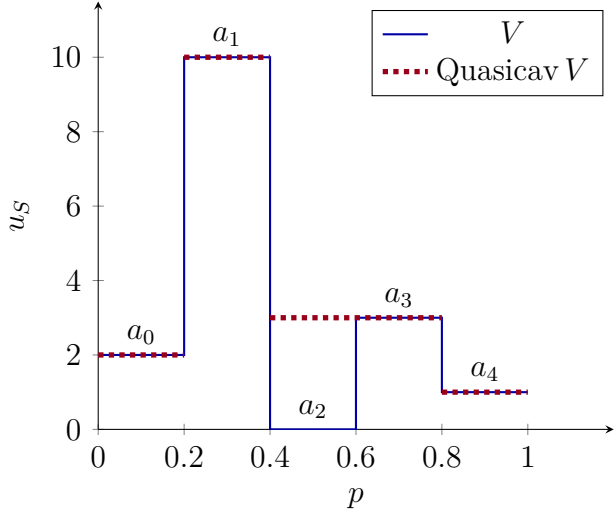

Let  $A^*:=\bigcup_{p \in \Delta \Omega} A^*(p)$ denote the set of actions that are a best response for the receiver under some belief, and let $a^* := \argmax_{a \in A^*} u_S(a)$ denote the sender’s  most preferred action in $A^*$. To  rule out non-generic cases, we assume that the sender’s payoffs are distinct across actions, i.e., $u_S(a) \neq u_S(b)$ whenever $a \neq b$.   

We now show that when the sender’s payoff is state-independent, a cheap talk outcome is efficient if and only if the sender’s most preferred action is induced with certainty. 

\begin{proposition}\label{cheaptalk-transparent}
	For a sender with state-independent payoff, a cheap talk outcome is efficient if and only if the sender's most preferred action $a^*$ is induced with certainty.
\end{proposition}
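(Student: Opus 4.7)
The plan is to prove both implications by leveraging Proposition \ref{cheaptalk} (which forces efficient cheap talk outcomes to be pure), the sender's incentive constraints under state-independent payoffs, and the ex-ante/ex-post characterization in Proposition \ref{prop:exanteexpost}.

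For the forward direction, I would start with an efficient cheap talk outcome $\mu$ and apply Proposition \ref{cheaptalk} to deduce that generically $\mu(\omega) = \delta_{a(\omega)}$ for each $\omega$. The next step is to argue that $a(\omega)$ is constant across states: because $u_S$ is state-independent and, by assumption, distinct across actions, if $u_S(a(\omega)) \neq u_S(a(\omega'))$ then the sender in the state paired with the lower-valued action would strictly prefer to deviate to the message inducing the other action. Hence $\mu$ plays some fixed $a \in A$, and receiver rationality pins down $a \in A^*(p) \subseteq A^*$. To conclude $a = a^*$, I would suppose the contrary and apply Proposition \ref{prop:exanteexpost} to obtain positive weights $(n_S,n_R)$ under which $a$ maximizes $n_S u_S(\cdot) + n_R u_R(\omega,\cdot)$ in every state. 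Specializing to $a^*$ and using $u_S(a) < u_S(a^*)$ (by genericity of sender payoffs and $a^* = \argmax_{A^*} u_S$) forces $u_R(\omega,a) > u_R(\omega,a^*)$ in every $\omega$; then $\mathbb{E}_{p'}[u_R(\omega,a)] > \mathbb{E}_{p'}[u_R(\omega,a^*)]$ for every belief $p'$, contradicting $a^* \in A^*$.

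For the reverse direction, I would assume the cheap talk outcome is $\mu(\omega) = \delta_{a^*}$ for every $\omega$ and construct weights satisfying Proposition \ref{prop:exanteexpost}. A posterior-averaging argument (each on-path posterior must make $a^*$ a receiver best response, and posteriors average to the prior) yields $a^* \in A^*(p)$. I would then exhibit $(n_S,n_R) \in \mathbb{R}_{++}^2$ such that $a^*$ maximizes $n_S u_S(\cdot) + n_R u_R(\omega,\cdot)$ over $A$ in every state, splitting the check into two cases. For $a \in A^* \setminus \{a^*\}$, the strictly positive sender gap $u_S(a^*) - u_S(a)$ dominates any bounded receiver discrepancy once $n_R/n_S$ is chosen sufficiently small. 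For $a \notin A^*$, I would use the characterization of $A^*$ as the vertices of the upper convex hull of $\{u_R(\cdot,a') : a' \in A\} \subseteq \mathbb{R}^\Omega$: any such $a$ is pointwise dominated in receiver payoff by some mixture $\alpha \in \Delta A^*$, and because $\sum \alpha(a') u_S(a') \leq u_S(a^*)$, the dominance translates into a state-wise bound on the admissible range of $n_R/n_S$. Intersecting the bounds across all states and all actions yields a nonempty interval of admissible weights, and picking any such $n_R/n_S$ completes the construction.

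The main obstacle is the reverse direction's handling of actions $a \notin A^*$ with $u_S(a) > u_S(a^*)$: for these the sender term alone favors $a$, so the argument must draw on the receiver dominance by an $A^*$-mixture being strict in each state. Generically this strictness holds, providing the slack needed to fit $n_R/n_S$ into a nondegenerate interval; without it the interval could collapse, so the role of generic payoffs in this construction is essential rather than cosmetic.
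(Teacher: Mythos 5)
Your forward direction opens by invoking \Cref{cheaptalk} to reduce to pure outcomes, and this is the main gap. \Cref{cheaptalk} is a \emph{generic} statement over the full space of payoff functions $u_S:\Omega\times A\to\mathbb{R}$, whereas the hypothesis of \Cref{cheaptalk-transparent}---state-independence of $u_S$---confines you to a Lebesgue-null subspace of that space, so the generic conclusion cannot be imported. The failure is not cosmetic: with a state-independent sender whose action payoffs are distinct, any influential equilibrium \emph{robustly} forces the receiver to mix at some non-degenerate posterior (this is the only way to generate the sender's indifference across messages), so stochastic cheap talk outcomes are not a knife-edge phenomenon that genericity removes. The paper's proof devotes most of its effort to exactly this case: at the posterior $q$ where the receiver mixes between $a_1$ and $a_2$ with, say, $u_S(a_1)>u_S(a_2)$, it locates a state $\omega^*$ with $q(\omega^*)>0$ in which the receiver also weakly prefers $a_1$ (the indifference locus of $a_1$ and $a_2$ is a hyperplane through $q$, so $q$ cannot be supported entirely on states where the receiver strictly prefers $a_2$), and replacing the mixture by $a_1$ in $\omega^*$ is then a Pareto improvement. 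Your proposal never addresses influential equilibria, so the ``only if'' direction is incomplete. By contrast, your treatment of the constant pure case (efficiency weights would force $u_R(\omega,a)>u_R(\omega,a^*)$ in every state, contradicting $a^*\in A^*$) is correct and is a clean alternative to the paper's argument, which instead directly exhibits a state where both players prefer $a^*$.

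In the reverse direction your weight construction is more explicit than the paper's one-line argument (that the sender attains his maximal feasible payoff), but the step you flag as the main obstacle is a genuine one, and your appeal to genericity does not resolve it. For $a\notin A^*$ with $u_S(a)>u_S(a^*)$, dominating $a$ under the common functional requires $n_R/n_S$ to be \emph{large} (so the receiver's dominance slack covers the sender's gap), while dominating the other elements of $A^*$ requires $n_R/n_S$ to be \emph{small}; whether these bounds are compatible is a quantitative question about payoff magnitudes, not a measure-zero issue, and when such an $a$ carries a large sender payoff but only a thin receiver slack the interval is empty and ``always play $a^*$'' even fails ex-post efficiency in some state. This direction therefore needs either an argument that such actions cannot occur, or the observation (true in the paper's leading example, and implicitly used in its proof) that $a^*$ maximizes $u_S$ over all of $A$, not merely over $A^*$.
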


Any informative equilibrium requires the sender to be indifferent between the messages he sends. For this to hold, the receiver must mix between at least two actions in response to one of the messages. But then there exists a state in which both players strictly prefer the sender’s most preferred action. Hence, any informative equilibrium fails to be efficient, as a  deviation to this pure action would make both strictly better off in that state.  A babbling equilibrium is efficient if and only if the sender’s most preferred action is induced. If $a^*$ is not induced, we can again find a state in which both players strictly prefer this action, implying inefficiency.

Consider the informative equilibrium in Example 1 given prior $p=0.5$. The equilibrium induces two posterior beliefs, $q=0.4$ and $q=0.6$ (see \Cref{fig:cheaptalk}). At belief $q = 0.4$, the receiver mixes between actions $a_1$ and $a_2$ to make the sender indifferent. However, both players prefer $a^* = a_1$ to $a_2$ in state $\omega_0$, so the cheap talk outcome is not ex-post efficient in that state. Similarly, consider the babbling equilibrium given prior $p=0.7$. Again, both players prefer $a^*=a_1$ to the induced action $a_3$ in state $\omega_0$, implying inefficiency. In contrast, the babbling equilibrium given prior $p=0.3$ is efficient, as  sender's most preferred action $a^*=a_1$ is induced.

Therefore, with state-independent sender payoffs, only the babbling equilibrium in which the receiver plays the sender’s preferred action $a^*$ is efficient. The same action would be taken even without communication, and the resulting outcome would be efficient. Thus, any non-trivial equilibrium in which communication influences the outcome necessarily leads to inefficiency.

\subsection{Bayesian persuasion}

We now turn to Bayesian persuasion \citep{kamenicagentzkow}, another model of strategic communication between a sender and a receiver. Unlike in cheap talk, the sender can commit to how messages are generated before the state is realized.

 The timing is as follows. First, the sender chooses a message space $M$ and a signaling policy $\sigma: \Omega \to \Delta M$ before the state is realized. Then, the state $\omega \in \Omega$ is drawn according to the common prior $p$, and the message $m \in M$ is sent according to the sender's policy. Upon observing the message, the receiver chooses an action $a \in A$. This results in payoffs $u_S(\omega,a)$ and $u_R(\omega,a)$ for the sender and receiver, respectively.

The sender’s objective is to choose a signaling policy that maximizes his ex-ante expected payoff by influencing the receiver’s action. As is standard, ties are broken in favor of the sender. Without loss, we restrict attention to \emph{direct signaling policies} $\mu:\Omega\to\Delta A$, where messages correspond to action recommendations. Given prior $p$, let $\mu^*_p$ denote the equilibrium outcome of Bayesian persuasion.\footnote{Generically, the equilibrium outcome in Bayesian persuasion  is  unique.}

\begin{definition}
	The \textbf{Bayesian persuasion (BP) outcome} $\mu^{*}_p: \Omega \rightarrow \Delta A$ solves 
	\begin{equation}
		\max_{\mu: \Omega \rightarrow \Delta A} \sum_{\omega \in \Omega} p(\omega) \sum_{a \in A} \mu(a \mid \omega) u_S(\omega, a)
	\end{equation}
	subject to
	\begin{equation}
		\sum_{\omega \in \Omega} p(\omega) \mu(a \mid \omega) \big(u_R(\omega, a) - u_R(\omega, b) \big) \geq 0 \quad \forall a,b \in A. \label{obedience}
	\end{equation}
\end{definition}

\Cref{obedience} corresponds to the receiver's \emph{obedience} condition: given a  recommended action $a$, the receiver prefers following it to deviating to any other action $b$. 

In \cite{kamenicagentzkow}, the BP outcome is characterized using the concavification approach \citep{aumann1995repeated}. Let $ \mathrm{Cav} \; V: \Delta \Omega \to \mathbb{R}\) denote the concave envelope of the value function $V$. The sender's expected payoff in the BP outcome  is given by the evaluation of the concave envelope at the prior:  $\mathbb{E}_{\mu^{*}_p}[u_S(\omega,a)]=\mathrm{Cav}  \; V (p)$.

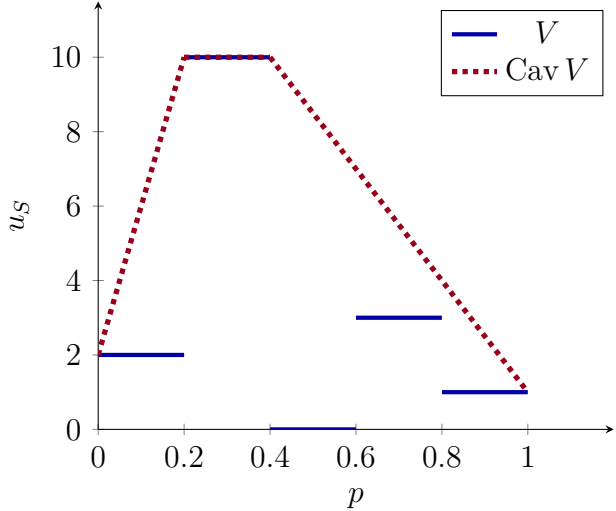
\begin{figure}[htp!]
	\centering
	\begin{tikzpicture}[scale=1]
		\begin{axis}[%
			,xlabel= $p$
			,ylabel= $u_{S}$
			,axis x line = bottom,axis y line = left
			,ytick={0,2,4,6,8,10}
			,xtick={0,0.2,0.4,0.6,0.8,1}
			,ymax=11.5, ymin=0,
			, xmax=1.2
			]

			\addplot[domain=0:0.2, MyBlue,smooth, ultra thick]{2};
			
			\addplot[domain=0.2:0.4, smooth,MyBlue,  ultra thick]{10};
			\addplot[domain=0.4:0.6, smooth,MyBlue,  ultra thick]{0};
			\addplot[domain=0.6:0.8,smooth, MyBlue,  ultra thick]{3};
			\addplot[domain=0.8:1,smooth, MyBlue,  ultra thick]{1};
			\legend{ Indirect utility}

			\addplot[domain=0:0.2, MyRed, smooth, dotted,  line width=2pt]{40*x+2};
			\addplot[domain=0.2:0.4, smooth,MyRed, dotted,  line width=2pt]{10};
			\addplot[domain=0.4:1,smooth, MyRed, dotted,  line width=2pt]{45/3*(1-x) +1};
			
			\legend{ $V$, ,,,,$\mathrm{Cav} \,V$}

		\end{axis}
	\end{tikzpicture}
	\caption{Bayesian persuasion: The value function (blue solid) and its concave envelope (red dotted).}
	\label{fig:cav}
\end{figure}

The value function and its concave envelope for \Cref{example} are depicted in \Cref{fig:cav}. The outcomes in \Cref{example} correspond precisely to the BP outcomes for the respective priors: (a) $p=0.10$, (b) $p=0.30$, and (c) $p=0.70$. As seen in \Cref{fig:efficiency}, the BP  outcome is efficient for case (b), not for cases (a) or (c). Moreover, within each convex region of the concave envelope, the BP outcome $\mu_p^*$ has the same support—and therefore the same number of actions—across all priors in that region. Hence, the BP outcome $\mu^{*}_p$ is efficient for $p \in [0.2,0.4]$ and  inefficient for   $p \in (0,0.2) \cup (0.4,1)$. 

\paragraph{Threshold environment with safe and risky actions}

We study a natural environment in which the Bayesian persuasion outcome fails our necessary condition for efficiency across a wide range of priors and preferences. 

The setting features one safe action and several risky actions, with as many states as actions. Each action corresponds to a particular state, and the receiver wants to take the matching action when he believes that state is sufficiently likely. A mismatched risky action pays less than the safe action, so the receiver defaults to the safe option when he is unsure about the state. By contrast, the sender prefers any risky action to the safe one, which is his least-preferred outcome. A natural example is a seller–buyer interaction in which the buyer (receiver) purchases a product only when sufficiently convinced it is the best choice, and otherwise prefers not to buy any product. Meanwhile, the seller (sender) prefers that some product be bought rather than none.

The action set is $A=\{a_0,a_1,\ldots,a_n\}$, where $a_0$ is the safe action and $a_i$  for $i=1,\ldots,n$ are risky actions. The state space is $\Omega=\{\omega_0,\omega_1,\ldots,\omega_n\}$, where $\omega_i$ is the state in which action $a_i$ is optimal for the receiver for all $i$. The sender’s payoff is state-independent: he gets $u_S(a_i) > 0$ for any risky action $a_i$ with $i=1,\ldots,n$, and $u_S(a_0)=0$ for the safe action.  We assume the prior $p \in \mathrm{int}(\Delta \Omega)$ lies in the region where  the sender's least preferred action $a_0$ is optimal. The receiver prefers the action that matches the state and is worse off when taking a mismatched risky action. We capture this by describing, for each action, the set of beliefs under which it is optimal.

Let $C_i \subseteq \Delta \Omega$ denote the convex subset of the beliefs where  the receiver's optimal action is  $a_i$.
\begin{equation}
	C_i := \{ p \in \Delta \Omega:  \mathbb{E}_p [u_R(\omega,a_i)] \geq  \mathbb{E}_p [u_R(\omega,a_j)] \quad \forall a_j \in A \}.
\end{equation}

The sets $\mathcal{P}=\{C_0,C_1,\ldots,C_n\}$ form a partition of the belief space $\Delta\Omega$. We assume:

\begin{enumerate}
	\item For each $i=0,\ldots,n$, the state $\omega_i \in C_i$ and there exists an open neighborhood $N_{\omega_i}\subset C_i$ containing $\omega_i$.
	
	\item For any distinct indices $i \neq j$ with $i,j \neq 0$, we have $C_i \cap C_j = \emptyset$.
\end{enumerate}

Condition 1 ensures that it is optimal to take action $a_i$ when the receiver is sufficiently confident that the state is $\omega_i$. Condition 2 ensures that when the receiver is unsure which risky action is optimal, he prefers the safe action.

To characterize the BP outcome $\mu^{*}_p$, we follow \cite{lipnowski2017simplifying} and restrict the  feasible posteriors to the finite set of \emph{outer points} $\mathrm{Out}(\mathcal{P})$, defined as
\begin{align}
	\mathrm{Out}(\mathcal{P}) :=& \{ p \in \Delta \Omega: p \in \mathrm{ext}(C_i) \text{ whenever  } p \in C_i \in \mathcal{P} \}, \\
	=& \big( \bigcup_{i=0}^n \omega_i  \big) \cup  \big( \bigcup_{i=1}^n \bigcup_{j\neq i} o_{ij} \big),
\end{align}

where $o_{ij} $ is the unique extreme point of the convex set $C_i$ that lies on the line segment joining vertices $\omega_i$ and $\omega_j$.\footnote{As we break ties in favor of the sender, if the belief $o_{ij}$ is induced, the receiver takes action $a_i$, since $u_S(a_i) > u_S(a_0)$.}  \cite{lipnowski2017simplifying} show that the BP outcome can always be supported using the set of posterior beliefs $Out(\mathcal{P})$. Furthermore, they show that only an affinely independent subset of posteriors is needed, so one can restrict attention to signaling policies that induce at most $n+1$ beliefs.

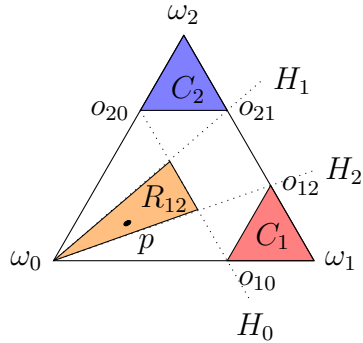
\begin{figure}[!htbp] 
	\centering
	\begin{tikzpicture}
		\draw[fill=white, xscale=2/sqrt(3),xslant=.5]
		(3,0)-|(0,3)--cycle;
		\draw[fill=red, fill opacity=0.50, xscale=2/sqrt(3),xslant=.5] 
		(3,0)-- (2,0)--(2,1)--(3,0);
		\draw[fill=blue, fill opacity=0.50, xscale=2/sqrt(3),xslant=.5]  
		(0,3)-- (0,2) -- (1,2) -- (0,3);
		
		\draw[fill=orange, fill opacity=0.50,  xscale=2/sqrt(3),xslant=.5] 
		(0,0) -- (1.32,0.68) -- (0.68,1.32) -- (0,0);
		
		\filldraw[black, xscale=2/sqrt(3),xslant=.5] (2.2,0.65) circle (0pt) node[anchor=north]{$C_1$};
		\filldraw[black, xscale=2/sqrt(3),xslant=.5] (0.25,2.6) circle (0pt) node[anchor=north]{$C_2$};
		\filldraw[black, xscale=2/sqrt(3),xslant=.5] (0,0) circle (0pt) node[anchor=east]{$\omega_0$};
		\filldraw[black ,xscale=2/sqrt(3),xslant=.5] (3,0) circle (0pt) node[anchor=west]{$\omega_1$};
		\filldraw[black, xscale=2/sqrt(3),xslant=.5] (0,3) circle (0pt) node[anchor=south]{$\omega_2$};
		
		\filldraw[black, xscale=2/sqrt(3),xslant=.5] (0.7,1.15) circle (0pt) node[anchor=north]{$R_{12}$};

		\draw[dotted,xscale=2/sqrt(3),xslant=.5] (0,0) -- (2,1) -- (2.4,1.2);
		
		\draw[dotted,xscale=2/sqrt(3),xslant=.5] (0,0) -- (1,2) --(1.2,2.4);
		
		\draw[dotted,xscale=2/sqrt(3),xslant=.5] (0,2) -- (2,0) -- (2.5,-0.5);
		
		\filldraw[black, xscale=2/sqrt(3),xslant=.5] (1,2) circle (0pt) node[anchor=west]{$o_{21}$};
		
		\filldraw[black, xscale=2/sqrt(3),xslant=.5] (2,1) circle (0pt) node[anchor= west]{$o_{12}$};

		\filldraw[black, xscale=2/sqrt(3),xslant=.5] (0,2) circle (0pt) node[anchor=east]{$o_{20}$};
		
		\filldraw[black, xscale=2/sqrt(3),xslant=.5] (2,0) circle (0pt) node[anchor=north west]{$o_{10}$};

		\filldraw[black, xscale=2/sqrt(3),xslant=.5] (2.6,-0.55) circle (0pt) node[anchor=north]{$H_0$};
		\filldraw[black ,xscale=2/sqrt(3),xslant=.5] (2.4,1.2) circle (0pt) node[anchor=west]{$H_2$};
		\filldraw[black, xscale=2/sqrt(3),xslant=.5] (1.2,2.4) circle (0pt) node[anchor=west]{$H_1$};

		\filldraw[black, xscale=2/sqrt(3),xslant=.5] (0.6,0.5) circle (1pt) node[anchor=north west]{$p$};
		
	\end{tikzpicture}
	\caption{The set $R_{12}$ (orange region) for $n=2$.}
	\label{inefficient}
\end{figure}

Suppose $n=2$ and the prior $p $ is as shown in \Cref{inefficient}.   Any feasible signaling policy whose induced posteriors lie on the outer points---while satisfying Bayes plausibility---induces all three actions.   In the BP outcome, the posterior $\omega_0$ is induced while $\omega_1$ and $\omega_2$ are not, since inducing these posteriors would lead the receiver to choose the safe action $a_0$ with high probability.    To increase the likelihood of a risky action ($a_1$ or $a_2$), the sender therefore assigns positive weight to non-degenerate beliefs such as $o_{10}, o_{12}, o_{20},$ or $o_{21}$.  Hence, the BP outcome is necessarily mixed—either all three actions occur in $\omega_0$, or both $a_1$ and $a_2$ are played in $\omega_1$ and $\omega_2$—and thus inefficient by \Cref{necessary}. 

This intuition extends naturally to settings with $n\geq2$ risky actions. For any two distinct risky actions, there exists a convex region of priors where the BP  outcome mixes between taking some risky action and the safe action.  The union of these regions, denoted $R_*$, forms a full-dimensional set of priors for which the outcome is mixed and therefore inefficient.

\begin{proposition} \label{BPfixedpref}
	For $n \geq 2$,  there exists a set $R_* \subseteq C_0$, with $\mathrm{dim}(R_*) = \mathrm{dim}(\Delta \Omega)$,  such that  the BP outcome is inefficient for all $p \in \mathrm{int}(R_*)$. 
\end{proposition}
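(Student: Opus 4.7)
The plan is to exhibit, for each pair of distinct risky actions $i,j\in\{1,\dots,n\}$, a full-dimensional open set $R_{ij}\subseteq C_0$ on which the BP outcome shares a common support structure that violates the bound in \Cref{necessary}; the desired region is then $R_*=\bigcup_{i\neq j}R_{ij}$. I would begin by invoking \cite{lipnowski2017simplifying} to restrict attention to signaling policies whose posteriors are affinely independent outer points in $\mathrm{Out}(\mathcal{P})$, using at most $n+1$ posteriors.

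For each pair $i\neq j$ in $\{1,\dots,n\}$, I would consider the candidate policy $\mu^{ij}$ supported on the $n+1$ posteriors
\[
\{\omega_k : k\notin\{i,j\}\}\cup\{o_{ij},o_{ji}\}.
\]
These posteriors are generically affinely independent in $\Delta\Omega$: the $n-1$ vertices $\omega_k$ combined with any two distinct interior points of the edge $[\omega_i,\omega_j]$ span an $n$-simplex. Their convex hull $T_{ij}$ therefore has dimension $n$, and because $\omega_0$ is a vertex of $T_{ij}$ and Condition~1 guarantees a neighborhood of $\omega_0$ inside $C_0$, the set $\mathrm{int}(T_{ij})\cap C_0$ is a nonempty open subset of $\Delta\Omega$.

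The core step is to show that $\mu^{ij}$ is the BP-optimal policy on a full-dimensional open subset $R_{ij}\subseteq\mathrm{int}(T_{ij})\cap C_0$. Since the sender's payoff is state-independent, the BP problem reduces to a linear program in the posterior weights $\lambda^\ell$, whose optimal basis depends piecewise-constantly on the prior. I would first exhibit one reference prior $p^*$ at which $\mu^{ij}$ is the unique optimum with all $\lambda^\ell$ strictly positive, and then apply standard LP sensitivity to conclude that $\mu^{ij}$ remains optimal throughout an open neighborhood $R_{ij}$ of $p^*$. The action count is then immediate: $|\mu^{ij}(\omega_0)|=1$ (only $a_0$, via posterior $\omega_0$); $|\mu^{ij}(\omega_k)|=1$ for each $k\notin\{i,j\}$ (only $a_k$, via posterior $\omega_k$); and $|\mu^{ij}(\omega_i)|=|\mu^{ij}(\omega_j)|=2$, since both $o_{ij}$ and $o_{ji}$ put positive weight on $\omega_i$ and $\omega_j$ while inducing the distinct actions $a_i$ and $a_j$. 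Summing yields $\sum_\omega|\mu^{ij}(\omega)|=n+3=k+|\Omega|$, violating the strict inequality of \Cref{necessary}, so the BP outcome is generically inefficient on $R_{ij}$.

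The main obstacle is constructing the reference prior $p^*$ and verifying strict optimality of $\mu^{ij}$ there. This reduces to ruling out dominating deviations---chiefly, replacing $o_{ij}$ with the pure posterior $\omega_i$ (full revelation of $\omega_i$), or with a different outer point $o_{ij'}$ for some $j'\neq j$. For a $p^*$ with sufficiently balanced weight on $\omega_i$ and $\omega_j$, each such deviation must strictly decrease the sender's expected payoff. I expect this to follow from comparing, for each alternative posterior, the Bayes-plausibility ``cost'' (how much prior weight on each state it consumes) against its marginal contribution to the probability of inducing a risky action, together with genericity of the receiver's payoffs to break ties and ensure $\mu^{ij}$ is uniquely optimal on a set of full dimension.
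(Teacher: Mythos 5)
Your overall architecture (full-dimensional regions $R_{ij}$ indexed by pairs of risky actions, union over pairs, action count $n+3=k+|\Omega|$ violating \Cref{necessary}) matches the paper, and the affine-independence and counting steps are fine. But there is a genuine gap at exactly the point you flag as "the main obstacle": your argument requires proving that the \emph{specific} candidate $\mu^{ij}$, supported on $\{\omega_k:k\notin\{i,j\}\}\cup\{o_{ij},o_{ji}\}$, is BP-optimal at some reference prior, and this can simply be false for an open set of admissible payoffs. Your candidate fully reveals $\omega_0$, so it yields the safe action with probability exactly $p(\omega_0)$ --- the sender's worst outcome. The competing policy supported on $\{\omega_0, o_{i0}, o_{j0}\}$ satisfies $\lambda(o_{i0})=p(\omega_i)/o_{i0}(\omega_i)>p(\omega_i)$ and likewise for $j$, so it induces a risky action with total probability strictly above $p(\omega_i)+p(\omega_j)$. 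Whenever $u_S(a_i)$ and $u_S(a_j)$ are close, this alternative strictly dominates $\mu^{ij}$ at \emph{every} prior in the region, so no reference prior $p^*$ exists and the LP-sensitivity step has nothing to start from. Since the proposition must hold for all sender payoffs with $u_S(a_i)>0$, pinning your proof to this one support structure does not work.

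The paper sidesteps this by never identifying the optimal policy. It defines $R_{ij}=R_i\cap R_j\cap R_0$ via separating hyperplanes $H_0,H_i,H_j$ through the outer points, and shows two things: (i) for any prior in $R_{ij}$, \emph{every} Bayes-plausible policy must induce all three actions $a_0,a_i,a_j$ with positive probability; and (ii) the optimal policy never places weight on the vertices $\omega_i$ or $\omega_j$, because replacing $\omega_i$ by $o_{i0}$ is a strictly profitable deviation. These two facts together force the optimum to take one of the support structures (all three actions mixed in $\omega_0$, or two actions mixed in each of $\omega_i$ and $\omega_j$), and \emph{either} structure already gives $\sum_\omega|\mu(\omega)|\ge k+|\Omega|$. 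To repair your proof you would need either to carry out the optimality verification for each possible support structure separately (covering the case where $o_{i0},o_{j0}$ are used), or, more cleanly, to replace the "exhibit the optimum" step with the paper's "constrain every feasible policy" step.
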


The previous proposition established inefficiency for a range of priors under fixed preferences. We now fix the prior and vary the receiver’s preferences. Specifically, we consider partitions induced by threshold-based decision rules  and show that, for sufficiently high thresholds, the BP outcome must be mixed.

Consider a receiver who takes risky action $a_i$ only when he is sufficiently confident about the state, that is, when his belief over the state $\omega_i$ is greater than some threshold $T>0.5$.\footnote{A similar class of preferences is considered in \cite{aybas2019persuasion}.}   Now, the convex subset $C_i$,  where the receiver's optimal action is $a_i$, is  given by:
\begin{align} \label{thresholdaction}
	C_i &= \{ p \in \Delta \Omega \mid  p(\omega_i) \geq T \} \quad \text{for all } i=1,\ldots,n,\\ 
	C_0& = \Delta \Omega \setminus \bigcup_{i=1}^n C_i.
\end{align}
Let $\mathcal{P}_T = \{C_0,\ldots,C_n\}$ denote the partition induced by threshold $T$.

To build intuition, consider $n=2$ and a prior $p  \in  \mathrm{int}(\Delta \Omega)$. As $T$ increases, the regions  $C_1$ and $C_2$ shrink, and the prior eventually lies in a region where any feasible  signaling policy must induce all three actions $a_0, a_1$ and $a_2$. By the previous  arguments, for any such prior, the Bayesian persuasion outcome must be mixed.

The same logic extends to any $n \ge 2$: for each prior, there exists a bound  $T_p < 1$ such that whenever $T > T_p$, the BP outcome becomes mixed and therefore  inefficient.

\begin{proposition} \label{BPfixedprior}
	For any prior  $p \in \mathrm{int}(\Delta \Omega)$, there exists a threshold $T_p<1$  such that the BP outcome   is inefficient with respect to the  partition $\mathcal{P}_T$ for all $T > T_p$.
\end{proposition}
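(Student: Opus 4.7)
The plan is to set $T_p := 1 - p(\omega_0)$; this lies strictly below $1$ since $p \in \mathrm{int}(\Delta \Omega)$. For $T > T_p$, one has $p(\omega_i) \leq 1 - p(\omega_0) = T_p < T$ for every $i \geq 1$, so the prior $p$ lies strictly inside $C_0$ and the receiver's default action is $a_0$. I would then solve the sender's problem explicitly for $T \in (T_p, 1)$ and show that the induced outcome violates the support bound in \Cref{necessary}.

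The first step is to reformulate the persuasion problem in terms of the ex-ante recommendation probabilities $P_i := \sum_\omega p(\omega)\mu(a_i \mid \omega)$. Since $u_S(a_0) = 0$ and $u_S(a_i) > 0$ for every $i \geq 1$, the sender's objective reduces to $\sum_{i \geq 1} u_S(a_i) P_i$. Under the threshold partition $\mathcal{P}_T$, the obedience constraint $q_i(\omega_i) \geq T$ is equivalent to $p(\omega_i)\mu(a_i \mid \omega_i) \geq T P_i$, which yields the double bound $P_i \leq p(\omega_i)\mu(a_i \mid \omega_i)/T \leq p(\omega_i)/T$. Because $u_S(a_i) > 0$, the sender strictly benefits from saturating each bound, and simultaneous saturation $P_i = p(\omega_i)/T$ for all $i \geq 1$ is feasible exactly when $\sum_{i \geq 1} p(\omega_i)/T \leq 1$, i.e., when $T \geq T_p$.

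The second step is to pin down the unique optimal $\mu$ for $T > T_p$. Attaining $P_i = p(\omega_i)/T$ forces the obedience inequality to bind, which in turn requires $\mu(a_i \mid \omega_i) = 1$ for every $i \geq 1$; in particular $\mu(a_k \mid \omega_i) = 0$ whenever $k \neq i$ and $i \geq 1$. Consequently, all of the ``stealing'' mass needed to inflate each $P_i$ above $p(\omega_i)$ must come from state $\omega_0$, so $\mu(a_i \mid \omega_0) = p(\omega_i)(1-T)/(T\,p(\omega_0)) > 0$ for each $i \geq 1$. The residual $\mu(a_0 \mid \omega_0) = 1 - (1-T)(1-p(\omega_0))/(T\,p(\omega_0))$ is strictly positive whenever $T > T_p$, so the support of $\mu(\omega_0)$ contains all of $a_0, a_1, \ldots, a_n$.

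Counting supports then gives $|\mu(\omega_i)| = 1$ for $i \geq 1$ and $|\mu(\omega_0)| = n+1$, hence $\sum_\omega |\mu(\omega)| = 2n+1$. For $n \geq 2$ this satisfies $2n+1 \geq n+3 = k + |\Omega|$, so \Cref{necessary} delivers generic inefficiency. The main obstacle is establishing the uniqueness of the sender's optimum---in particular, ruling out alternative optima with strictly smaller support. This rests on two observations: (i) the upper bound $P_i \leq p(\omega_i)/T$ holds regardless of the rest of the policy, so the sender must saturate every such bound; and (ii) once $\mu(a_j \mid \omega_j) = 1$ for some $j \geq 1$, state $\omega_j$ can supply no stealing mass for any other action, so $\omega_0$ is the unique source of mixing, pinning down the full support structure.
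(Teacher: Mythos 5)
Your proof is correct, but it takes a genuinely different route from the paper's. The paper never solves the sender's problem explicitly: it projects $p$ from the vertex $\omega_0$ onto the opposite face, identifies the two risky states $i^*,j^*$ closest to that projection, and chooses $T_p$ large enough that $p$ falls into the region $R_{i^*}\cap R_{j^*}\cap R_0$ constructed in the proof of \Cref{BPfixedpref}, at which point that proposition's geometric argument (any Bayes-plausible splitting must induce $a_{i^*}$, $a_{j^*}$, and $a_0$, and the optimal posteriors avoid the vertices $\omega_{i^*},\omega_{j^*}$) delivers a mixed outcome. You instead exploit the special structure of the threshold partition—the obedience constraint for $a_i$ collapses to the single inequality $p(\omega_i)\mu(a_i\mid\omega_i)\ge T P_i$—to bound each $P_i$ by $p(\omega_i)/T$, show all bounds are simultaneously attainable exactly when $T\ge 1-p(\omega_0)$, and deduce that every optimum must saturate them, which pins down the unique optimal policy in closed form with full support $\{a_0,\dots,a_n\}$ at $\omega_0$. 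This buys you a sharper threshold ($T_p=1-p(\omega_0)$, which can be strictly below the paper's $\max\{1-p(\omega_0),1-q(\omega_{i^*}),1-q(\omega_{j^*})\}$), a direct uniqueness argument that the paper only asserts generically, and a cleaner support count; the cost is that the argument is tailored to threshold preferences, whereas the paper's machinery is the same one used for general partitions in \Cref{BPfixedpref}. Two small caveats: your count $2n+1\ge n+3=k+|\Omega|$ requires $n\ge 2$, which you should state up front (it is implicit in the paper via its reliance on \Cref{BPfixedpref}); and the final step, like the paper's, delivers inefficiency only generically via \Cref{necessary}, and at the binding posteriors the tie $q_i(\omega_i)=T$ is resolved in the sender's favor, which is the paper's standing convention and worth a word.
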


Together, these results show that Bayesian persuasion outcomes are inefficient across a wide range of priors and preferences.

\subsection{Allocation problem without transfers}

We now turn to a mechanism design problem without transfers, studied by \citet{niemeyer2024optimal}. A principal needs to allocate a good among agents with correlated types. This setting fits naturally within our framework of finite games with incomplete information.

A single indivisible good must be allocated among $k$ players: one principal and $k-1$ agents, where $k\ge3$.   The state is $\omega=(\omega_1,\dots,\omega_{k-1})\in\Omega=\prod_{i=1}^{k-1}\Omega_i$, and agent $i$’s private type is $\omega_i\in\Omega_i$ with $|\Omega_i|\ge2$.   If agent $i$ receives the good, his payoff is $1$, and $0$ otherwise.  The principal’s payoff from allocating to agent $i$ in state $\omega$ is $u_i(\omega)\in[-1,1]$, while keeping the good yields $0$.  Let $p\in\mathrm{int}(\Delta\Omega)$ denote the common prior over states, and $A=\{a_0,a_1,\dots,a_{k-1}\}$ the set of actions, where $a_0$ denotes the principal keeping the good and $a_i$ denotes allocating it to agent $i$.

A (direct) mechanism is a mapping $\mu:\Omega \to \Delta A$, where $\mu(a \mid \omega)$ is the probability that action $a$ is taken when the state is $\omega$.  \emph{Dominant-strategy incentive compatibility (DIC)}  requires that truthful reporting maximizes each agent’s allocation probability: 
\begin{equation}
	\mu (a_i \mid \omega_i,\omega_{-i}) \geq \mu( a_i \mid \omega_i',\omega_{-i}), \quad \forall \omega_i,\omega_i' \in \Omega_i, \ \forall \omega_{-i}\in \Omega_{-i},  \forall i \in \{ 1, \ldots, k-1\}.
\end{equation}

The principal's goal is to find the \emph{optimal mechanism}: the DIC mechanism that maximizes  his expected payoff.  Because the objective is linear and the set of DIC mechanisms is convex, any optimal mechanism is a convex combination of extreme DIC mechanisms.  When the state space is sufficiently rich, \citet{niemeyer2024optimal} show that almost all extreme mechanisms are stochastic, making randomization a natural property of optimal mechanisms in this environment.  Although not the principal’s objective, examining the welfare properties of the outcome helps determine when optimality is at odds with efficiency.

\paragraph{Ranking-based mechanism} Because optimal  mechanisms lack closed-form descriptions, \cite{niemeyer2024optimal} introduce \emph{ranking-based mechanisms}, which are simple yet approximately optimal. 

These mechanisms work in two steps. First, fix a threshold $t \in (0,1]$ and let $r_i(\omega)$ denote agent $i$'s rank at state $\omega$ according to \emph{peer value} $u_i(\omega_{-i}):=\mathbb{E}[u_i \mid \omega_{-i}]$. The principal  randomly selects one of these eligible agents ranked among the top $t(k-1)$ agents. Second, the principal allocates the good to the selected agent if and only if his \emph{robust rank} $r_i^*(\omega_{-i})=\max_{\omega_i \in \Omega_i} r_i(\omega_i, \omega_{-i})$ also lies within the top $t(k-1)$ and his peer value  is non-negative. If the selected agent fails this eligibility test, the good is kept by the principal. Thus, each eligible agent is assigned the good with a probability of at most $1/(t(k-1))$, and any leftover probability mass is kept by the principal.

A crucial factor for the performance of ranking-based mechanisms is the impact that any individual agent’s report has on his own rank, which they refer to as the \emph{informational size}. For each state $\omega$, the informational size is defined as
\begin{equation}
	\delta(\omega) :=\max_{i\in\{1, \ldots, k-1\}} \max_{\omega_i'\in \Omega_i} 
	\bigl| r_i(\omega_i,\omega_{-i}) - r_i(\omega_i',\omega_{-i}) \bigr|.
\end{equation}

When informational size is large, many agents can manipulate their position above the threshold, and  the mechanism may erroneously withhold the good. When it is small, no single agent can substantially affect rankings. This assumption is natural in large environments with many agents, where each individual’s type has only limited influence.  

We impose mild regularity conditions similar to \citet{niemeyer2024optimal}.  First, in every state there exists at least one agent with a non-negative peer value.  Second, the threshold $t$ is chosen large enough so that, in every state, there are at least two eligible agents and the principal allocates the good to some eligible agent if selected.   Ranking-based mechanisms are approximately optimal when informational size is small and the number of agents is large.   Under similar assumptions, we show that these mechanisms are generically inefficient.
\begin{proposition}\label{prop:peer}
	Suppose that for every state $\omega$, the threshold $t$ satisfies  $t \geq \max\Bigl\{\tfrac{1}{k-1} + \delta(\omega), \tfrac{2}{k-1}\Bigr\},$ and that there exists at least one agent $i$ with non-negative peer value  $u_i(\omega_{-i}) \ge 0$. Then, generically, the ranking-based mechanism is inefficient.
\end{proposition}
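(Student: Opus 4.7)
The plan is to apply \Cref{necessary} by showing that the ranking-based mechanism's action support in each state is large enough that \eqref{efficiency:number} fails.

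The first step is to argue that $|\mu(\omega)| \geq 2$ for every $\omega \in \Omega$. The assumption $t \geq \tfrac{2}{k-1}$ forces the top-ranked group to contain at least two agents, so the uniform random selection over this group puts positive probability on at least two of them. The condition $t \geq \tfrac{1}{k-1} + \delta(\omega)$ prevents the robust-rank test from excluding any top agent, so the only reason a selected agent can fail to receive the good is a negative peer value. Since peer value determines the rank, the top-ranked agent has the highest peer value, and the assumption that some agent has non-negative peer value guarantees this highest value is non-negative; hence at least one selected agent always passes. Therefore the support of $\mu(\omega)$ contains at least two actions: either two allocation actions $a_i, a_j$ when two top agents pass the peer-value test, or one allocation $a_i$ together with the keep-action $a_0$ when some top agent has negative peer value.

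The second step uses $|\Omega| = \prod_{i=1}^{k-1}|\Omega_i| \geq 2^{k-1} \geq k$ for all $k \geq 3$, which follows by a short induction from $|\Omega_i| \geq 2$. Combining the two bounds,
\begin{equation*}
\sum_{\omega \in \Omega}|\mu(\omega)| \;\geq\; 2|\Omega| \;\geq\; k + |\Omega|,
\end{equation*}
which violates \eqref{efficiency:number}, so \Cref{necessary} delivers generic inefficiency.

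The main obstacle is justifying genericity in this environment, where the agents' payoffs are rigidly $0/1$ and only the principal's payoffs $(u_i(\omega))$ are free parameters. I would need to verify that the genericity argument underlying \Cref{necessary} still yields independent constraints on the welfare weight $n = (n_P, n_1, \ldots, n_{k-1})$ when perturbations are restricted to the principal's coordinates. Each additional action in $\mu(\omega)$ produces an indifference equation of the form $n_P(u_i(\omega) - u_j(\omega)) + n_i - n_j = 0$ or $n_P u_i(\omega) + n_i = 0$, and each such equation depends non-trivially on a distinct free parameter $u_i(\omega)$. Hence the vanishing of any fixed $k \times k$ minor of the resulting constraint matrix is a non-trivial polynomial identity in the $(u_i(\omega))$, and so fails outside a set of Lebesgue measure zero, which is exactly the genericity needed to invoke \Cref{necessary}.
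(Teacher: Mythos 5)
Your proposal is correct and follows essentially the same route as the paper's proof: establish $|\mu(\omega)|\geq 2$ in every state from the selection and eligibility assumptions, use $|\Omega|\geq 2^{k-1}\geq k$ for $k\geq 3$, and conclude that $\sum_{\omega}|\mu(\omega)|\geq 2|\Omega|\geq k+|\Omega|$ violates \Cref{necessary}. Your closing paragraph on why genericity survives when only the principal's payoffs $(u_i(\omega))$ are free parameters addresses a point the paper leaves implicit, and is a worthwhile addition; the only slight imprecision is the claim that the robust-rank test excludes no top agent (an agent near the cutoff may still fail it), but this does not matter since a failed test only adds $a_0$ to the support.
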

The principal randomizes over at least two agents. Given the eligibility assumption, either two agents or one agent and the principal are allotted the good with positive probability in each state. Summing over all states then violates the bound in \Cref{necessary}, implying that the outcome induced by the ranking-based mechanism is inefficient.

\section{Conclusion}
\label{sec:conclusion}

This paper analyzed Pareto efficiency in games with incomplete information. We identified necessary conditions based on the number of action profiles taken across states—independent of the prior, the action profiles used, and the weight of randomization. We found that any stochastic cheap talk outcome is generically inefficient, and that when the sender’s preferences are state-independent, it is efficient if and only if the sender’s preferred action is chosen with certainty. In Bayesian persuasion, equilibrium outcomes are also generically inefficient for a broad set of priors and preferences. These findings  highlight that preference misalignment prevents efficiency in direct communication between the sender and the receiver. Beyond two-player games, our results extend to an allocation problem without transfers, where ranking-based mechanisms rely on unavoidable randomization that generically leads to inefficiency.

Several directions for future research remain open. A key challenge is to identify sufficient conditions under which persuasion outcomes can be efficient.  Moreover, alternative economic models—such as mediation or delegation—warrant further study. A central question is whether any communication protocol can guarantee efficiency under incomplete information.

\bibliographystyle{apalike} 
\bibliography{references} 

\appendix
\section{Appendix}
\label{sec:appendix}

\subsection{Proof of \Cref{prop:exanteexpost}}

For a convex polytope $F \subseteq \mathbb{R}^k$ and for any vector $n \in \mathbb{R}^k$, let  
\begin{equation}
	S(F;n):= \{ x \in F \mid n^T x= \max_{y \in F} n^T y\}
\end{equation}
denote the set of maximizers $x$ of the inner product $n^T x$ over $F$.

The characterization relies on the following relation for the Minkowski sum of convex compact sets, as shown by \cite{fukuda2004zonotope}:
\begin{equation} \label{maximizers}
	S( F_1 + ... + F_k ; n) = S(F_1; n) +...+ S(F_k ; n) \text{ for any } n \in \mathbb{R}^k
\end{equation}
where, $F_i$  is a compact convex set for all $i=1,...,k$ and $F_1+ F_2$ denotes the Minkowski sum of the sets $F_1$ and $F_2$.

Each point in the Pareto  frontier of the set $F_p$   maximizes some strictly positive linear functional $n^Tx$ over $F_p$.  Given   $F_p=\sum_{\omega}p(\omega) F_\omega$, and using Equation \eqref{maximizers}, the payoff vector  $u(\mu)$  lies on the Pareto frontier of $F_p$ if and only if, for all $\omega \in \Omega$, the payoff vector $u(\mu \mid \omega)$ lies on the Pareto frontier of $F_\omega$ and there exists a common  strictly positive vector $n$ that is maximized. Ex-post efficiency ensures that the outcome maximizes a positive linear functional over the  feasible payoff set for each state but does not ensure the existence of a common strictly positive functional. 

\subsection{Efficiency via payoff deviations}
In the following proposition, we provide a necessary and sufficient condition for an outcome  to be efficient. This condition is determined solely by the payoff functions of the players. To check for efficiency, one must look at the possible change in pair of payoffs when deviating from the recommended action for all possible states. Define
\begin{equation}
	d_\mu(\omega,a):=\big( u_1(\omega,a)- u_1(\omega,\mu(\omega)), \cdots, u_k(\omega,a)- u_k(\omega,\mu(\omega)) \big)
\end{equation}
as   the deviation in state $\omega$ when action profile $a$ is taken instead of the recommended  profile $\mu(\omega) \in \Delta A$. Let
\begin{equation}
	\mathcal{D}_\mu=\{ \tilde{d}\in \mathbb{R}^k:\tilde{d}=d_\mu(\omega,a) \text{ for some } \omega \in \Omega, a \in A \}
\end{equation}
denote the set of all   deviations given outcome $\mu$.   And, let 
\begin{equation}
	\mathrm{cone}(\mathcal{D}_\mu) := \Big\{ \sum_{d\in \mathcal{D}_\mu} \lambda_d \; d :\lambda_d \ge 0 \text{ for all } d \in \mathcal{D}_\mu \Big\}.
\end{equation}
denote the   cone generated by the set of    deviations  $\mathcal{D}_\mu$. An outcome is efficient if and only if no convex combination of deviations across states leads to a Pareto improvement.

\begin{proposition}
	\label{propdeviations}
	An outcome $\mu:\Omega \to \Delta A$ is efficient if and only if 
	\begin{equation} \label{cone}
		\mathrm{cone}(\mathcal{D}_\mu)\cap \mathbb{R}^k_{+}=\{\mathbf{0}\}.
	\end{equation}
\end{proposition}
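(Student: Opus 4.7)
The plan is to derive both directions from a single identity that expresses payoff differences as conic combinations of deviations.

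First I would establish the key identity: for any outcome $\mu'$, using $\sum_a \mu'(a \mid \omega) = 1$, one obtains
\begin{equation*}
u(\mu') - u(\mu) \;=\; \sum_{\omega \in \Omega} p(\omega) \sum_{a \in A} \mu'(a \mid \omega) \, d_\mu(\omega, a),
\end{equation*}
so the difference automatically lies in $\mathrm{cone}(\mathcal{D}_\mu)$ since $p(\omega)\,\mu'(a \mid \omega) \ge 0$. The sufficiency direction follows at once: if $\mathrm{cone}(\mathcal{D}_\mu) \cap \mathbb{R}^k_+ = \{\mathbf{0}\}$, any alternative $\mu'$ with $u(\mu') \ge u(\mu)$ componentwise must satisfy $u(\mu') - u(\mu) = \mathbf{0}$, so no strict Pareto improvement exists and $\mu$ is efficient.

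For necessity I argue by contrapositive. Given $v = \sum_{\omega, a} \lambda_{\omega, a}\, d_\mu(\omega, a) \in \mathbb{R}^k_+ \setminus \{\mathbf{0}\}$ with $\lambda_{\omega, a} \ge 0$, the goal is to exhibit an outcome whose payoff weakly dominates $u(\mu)$ with strict inequality in some coordinate. I would propose the perturbation
\begin{equation*}
\mu_\epsilon(a \mid \omega) \;:=\; \mu(a \mid \omega) \;+\; \epsilon \left[ \frac{\lambda_{\omega, a}}{p(\omega)} - \mu(a \mid \omega) \sum_{b \in A} \frac{\lambda_{\omega, b}}{p(\omega)} \right]
\end{equation*}
for small $\epsilon > 0$. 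The bracketed correction is engineered so that (i) it sums to zero in $a$, preserving $\sum_a \mu_\epsilon(a \mid \omega) = 1$, and (ii) its subtracted piece contributes nothing to the payoff difference, because $\sum_a \mu(a \mid \omega)\, d_\mu(\omega, a) = 0$ by the definition of $d_\mu$. Non-negativity of $\mu_\epsilon(a \mid \omega)$ for small $\epsilon$ is clear whenever $\mu(a \mid \omega) > 0$; in the case $\mu(a \mid \omega) = 0$ the correction reduces to $\lambda_{\omega, a}/p(\omega) \ge 0$. A direct computation then yields $u(\mu_\epsilon) - u(\mu) = \epsilon v \in \mathbb{R}^k_+ \setminus \{\mathbf{0}\}$, contradicting efficiency of $\mu$.

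The main obstacle is the construction of $\mu_\epsilon$: arbitrary non-negative cone coefficients $\lambda_{\omega, a}$ need not correspond to a probability distribution over $A$ in each state, so one cannot simply set $\mu'(a \mid \omega) \propto \lambda_{\omega, a}$. The correction term projects the desired conic combination onto the tangent space of outcomes at $\mu$ while preserving the payoff direction, which is exactly what the orthogonality $\sum_a \mu(a \mid \omega)\, d_\mu(\omega, a) = 0$ makes possible.
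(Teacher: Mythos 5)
Your proof is correct, but it takes a genuinely different route from the paper's. The paper argues through dual (supporting-weight) objects: for necessity it uses the fact that an efficient point of the polytope $F_p$ maximizes some strictly positive linear functional $n^T x$, and for sufficiency it invokes Mangasarian's theorem of the alternative to produce a common strictly positive $n$ with $n \cdot d \le 0$ for all $d \in \mathcal{D}_\mu$, then concludes via \Cref{prop:exanteexpost}. You instead work entirely on the primal side: the identity $u(\mu') - u(\mu) = \sum_{\omega} p(\omega) \sum_a \mu'(a \mid \omega)\, d_\mu(\omega,a)$ shows that every feasible payoff difference lies in $\mathrm{cone}(\mathcal{D}_\mu)$, which gives sufficiency immediately, and your explicit perturbation $\mu_\epsilon$ converts any nonzero element of $\mathrm{cone}(\mathcal{D}_\mu)\cap\mathbb{R}^k_+$ into an actual Pareto-improving outcome; the construction is sound, since the correction term sums to zero over $a$, is non-negative for small $\epsilon$ (including at actions with $\mu(a\mid\omega)=0$), and its subtracted piece drops out of the payoff because $\sum_a \mu(a\mid\omega)\, d_\mu(\omega,a) = \mathbf{0}$, yielding $u(\mu_\epsilon)-u(\mu)=\epsilon v$. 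Your argument is more elementary and self-contained—it needs neither a theorem of the alternative nor the Minkowski-sum characterization—while the paper's dual route has the advantage of explicitly producing the common positive weight vector $n$, which is the object that drives the normal-cone dimension count in \Cref{necessary}. One cosmetic remark: you index deviations by pairs $(\omega,a)$ whereas the paper defines $\mathcal{D}_\mu$ as a set of vectors; since any conic combination of that set can be realized by non-negative coefficients on representative pairs, this is harmless.
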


\begin{proof}
	$(\Rightarrow)$  We  prove by contradiction. If $\mu$ is efficient, then there exists $n \in \mathbb R^k_{++}$ such that  $n \cdot d \leq 0 \quad \text{for all } d \in \mathcal D_\mu.$ Suppose instead that \eqref{cone} does not hold.  Then there exist non-negative coefficients $(\lambda_d)_{d\in \mathcal D_\mu}$ with  $d_\lambda = \sum_{d\in \mathcal D_\mu} \lambda_d d  \in \mathbb R^k_{+}\setminus\{\mathbf 0\}$. In particular, as $n \in \mathbb R^k_{++}$, this implies that $n \cdot d_\lambda > 0$. But $n \cdot d_\lambda = \sum_{d\in \mathcal D_\mu} \lambda_d (n \cdot d),$ so there exists some $d \in \mathcal D_\mu$ with $\lambda_d>0$ for which  $n \cdot d > 0$, leading to a contradiction. 
	
	$(\Leftarrow)$ Assume \eqref{cone} holds, i.e., no strictly positive vector lies in the cone generated by the deviations.  Let $A $ denote the $ k  \times |\mathcal{D}_\mu|$ matrix whose columns are the deviation vectors  $d_\mu(\omega,a)$. By Mangasarian’s Theorem, as stated in \cite{perng2017class}, for any real matrix $A$  exactly one of the following holds:
	\begin{enumerate}
		\item[(i)] There exists $x\geq \mathbf{0}$, $x\neq \mathbf{0}$ such that $Ax \geq \mathbf{0}$ (with at least one strictly positive component).
		\item[(ii)] There exists $n>\mathbf{0}$  (strictly positive component wise) such that $n^\top A \leq \mathbf{0}$.
	\end{enumerate}   
	Since case (i) is ruled out by \Cref{cone}, case (ii) must hold. Thus, there exists $n \in \mathbb R^k_{++}$ such that $n \cdot d \le 0$ for all $d \in \mathcal D_\mu$.
	
	Fix any state $\omega$ and action $a \in A$. Then  $  n \cdot d_\mu(\omega,a)=n \cdot (u(\omega,a) - u(\mu \mid \omega)) \leq 0$. This implies $u(\mu \mid \omega) \in S(F_\omega;n)$. Hence $\mu$ is ex-post efficient in every state, and the same strictly positive $n$ suffices for all states. Thus, the outcome $\mu$ is  efficient.
\end{proof}

\begin{remark}
	For two players, the condition reduces to: there is no deviation with  $d_\mu(\omega,a) \geq \mathbf{0}$ (strict in at least one component), and for  any pair of deviations $(d, \tilde d)$ where $d$ benefits player 1 and $\tilde d$  benefits player 2, the internal angle between $d$ and $\tilde d$ is at most $180^\circ$. Equivalently, $\tilde d \cdot d^{\dagger} \geq 0$, where $d^{\dagger}$ 
	denotes the vector obtained by rotating $d$ clockwise by $90^\circ$. In particular,
	if $d=(x,-y)$ with $x,y \geq 0$, then $d^{\dagger} = (-y,-x).$
\end{remark}

The efficiency condition depends solely on the support of the outcome and is independent of the weight of randomization and the prior. This has two implications. First, if an outcome is efficient (or inefficient) for a prior $p \in \mathrm{int}(\Delta \Omega)$, it remains so for all interior priors. Second, the support of the Bayesian persuasion outcome, for each state, remains fixed within a convex region, with only the weight of randomization varying. Thus, if the Bayesian persuasion outcome   is efficient (or inefficient) for a given prior, it holds across  all interior priors in that convex region.

\subsection{Proof of Proposition \ref{cheaptalk}}
We prove by contradiction. Suppose there exists an efficient cheap talk outcome that is not pure.

First, consider the case where the signaling policy is necessarily stochastic. So, there exists $\omega^* \in \Omega$ and messages $m_1, m_2 \in M$ such that $\sigma(m_1 \mid \omega^*) \cdot \sigma(m_2 \mid \omega^*) > 0$. Note that $\tau(m_1) \neq \tau(m_2)$, since otherwise the same outcome could be induced using a single message. By Theorem \ref{necessary}, if more than two actions are played in a given state then the outcome is inefficient for a generic set of payoffs. Therefore, assume exactly two actions, $a_1$ and $a_2$, are played with positive probability.  The sender's equilibrium condition implies that
\begin{equation}
	u_S(\omega^*,a_1)=u_S(\omega^*,a_2).
\end{equation}

For the outcome to be $\omega^*$-efficient, the receiver must also be indifferent:
\begin{equation}
	u_R(\omega^*,a_1)=u_R(\omega^*,a_2).
\end{equation}

If this indifference did not hold, then deviating to a pure action  would strictly improve the receiver's payoff in the state $\omega^*$. Consider any perturbation of payoffs that preserves the sender’s indifference, so the outcome remains an equilibrium. For any such generic perturbation, the receiver’s indifference conditions hold only on a subset of payoff vectors with Lebesgue measure zero, and thus represent a non-generic condition. For example, if $A = \{a_1, a_2\}$, both indifference conditions imply that the state $\omega^*$ is payoff-irrelevant. 

Second, consider the case where  the signaling policy is pure. As the outcome is stochastic, this implies that the receiver's response to some message $m$ is mixed. This implies that the receiver has multiple best responses at the posterior belief $q_m$.  Again, to ensure efficiency, assume exactly two actions $a_1$ and $a_2$ are played with positive probability and the sender is indifferent between them. Otherwise, deviating to a pure action would strictly improve the sender’s payoff without reducing the receiver’s. This implies that
\begin{equation}
	\mathbb{E}_{q_m} [u_S(\omega,a_1)]=     \mathbb{E}_{q_m} [u_S(\omega,a_2)],
	\quad \quad  \mathbb{E}_{q_m} [u_R(\omega,a_1)]=     \mathbb{E}_{q_m} [u_R(\omega,a_2)].
\end{equation}

As before, this corresponds to a non-generic condition. So, the  cheap talk outcome is inefficient with respect to the prior $q_m$ under all generic perturbations.  This implies that there exists a deviation $d_\mu(\omega,a)$ in some state $\omega \in \mathrm{supp}(q_m)$ and action $a$ that violates the efficiency condition  of Proposition \ref{propdeviations}. First, since $\mathrm{supp}(q_m) \subseteq \mathrm{supp}(p)$, this deviation is also feasible under the prior $p$. Second, condition (ii) in Proposition \ref{propdeviations} imposes stronger restrictions under prior $p$ as the set of possible deviations is larger. Hence, generically, the outcome is  inefficient with respect to $F_p$.

\subsection{Proof of Proposition \ref{cheaptalk-transparent}}

$(\Rightarrow)$ Following \cite{sobel2013giving}, we say an equilibrium is \emph{influential} if the receiver does not always take the same action. First, we show that any influential equilibrium  is inefficient. Then, we show that any non-influential equilibrium that does not induce the sender's preferred action with certainty is  inefficient.

Assume that the equilibrium is influential, that is, $\tau( m) \in \Delta A$ is not constant on the equilibrium path. For this to happen, at least two messages are sent with positive probability, resulting in  different actions. First, observe that at least one message must induce a non-degenerate posterior belief $q\in \Delta \Omega$.\footnote{A non-degenerate belief refers to a belief where the probability distribution assigns positive probability to more than one state.} This is due to the sender's equilibrium condition as he must be indifferent between sending messages that result in the same expected payoff. And as we assume the sender's payoff is state-independent and each action leads to a different payoff, some randomization is necessary for the indifference condition to hold. Such a randomization can  only occur at a non-degenerate belief  where the receiver has multiple best response actions, that is, $|A^*(q)|>1$.\footnote{ We omit non-generic cases where the receiver has multiple best responses at any degenerate belief.}  So, there must be at least two distinct actions $a_1$ and $a_2$  that are played with positive probability when the posterior belief $q$ is induced.  For example, consider the influential equilibrium for   prior $p=0.5$ in \Cref{fig:cheaptalk}, where the posterior beliefs $q=0.4$ and $q=0.6$ are induced. Given the non-degenerate posterior belief $q=0.4$, the sender must randomize between $a_1$ and $a_2$ to satisfy the sender's indifference condition. However, the sender strictly prefers one action over the other, for instance, assume  $u_S(a_1) > u_S(a_2)$.  This implies that the  sender's preferred action $a_1$ is induced with less than probability one at belief $q=0.4$. Since both actions are receiver’s best responses under some non-degenerate belief, there exists a hyperplane passing through the belief $q$ that separates the simplex into two convex regions—one where the receiver prefers $a_1$ and one where he prefers $a_2$.   In our example,  for the pair of actions $a_1$ and $a_2$, these convex regions are given by the intervals $[0,0.4]$ and $[0.4,1]$ respectively. Using this partition,  one can always identify a state where the receiver also prefers the sender's preferred action.  Formally, given actions $a_1,a_2 \in A^*(q)$, there exists a state $\omega^*$ such that $q(\omega^*)>0$ and  $u_R(\omega^*,a_1)\geq u_R(\omega^*,a_2)$.   In our example (see \Cref{fig:cheaptalk}),  both the sender and the receiver prefer action $a_1$ over $a_2$ in state $\omega^*=\omega_0$. Hence, the equilibrium   is not ex-post efficient in  state $\omega^*$. The deviation to  play  the sender's preferred action $a_1$ is profitable and does not satisfy condition (i) of \Cref{propdeviations}.

Now, consider a non-influential (or babbling) equilibrium where the receiver always plays the    action $a \neq a^*$.\footnote{If the receiver chooses a mixed action, select a pure action from the support that differs from the sender's preferred action, that is, pick action $a \in \bigcup_{m}\mathrm{supp}(\tau(m))$ such that $a \neq a^*$.} As we assume the prior lies in  the interior of the belief simplex, as before, we can  identify a state $\omega^*$ where the receiver  prefers the  sender's preferred action $a^*$ over $a$. For example, given a  prior $p \in [0.6,0.8]$ in \Cref{fig:cheaptalk}, the equilibrium  is babbling, resulting  in  action $a_3$ with certainty. However,   both players prefer the action $a^*=a_1$ over $a_3$ in state $\omega^*=\omega_0$. So, the equilibrium  is not ex-post efficient in state $\omega^*$.   To summarize, any cheap talk equilibrium   that does not induce the sender's preferred action $a^*$ with certainty is inefficient.

$(\Leftarrow)$     The babbling equilibrium where the sender's preferred action $a^*$ is induced with certainty is an efficient outcome. In this case, the sender gets the highest payoff within his feasible  set, ensuring that the outcome lies on the Pareto frontier.

\subsection{Proof of Proposition \ref{BPfixedpref}}

Let $H_i$ for $i\neq 0$ be the hyperplane defined by the set of points $ \{  \omega_0  \} \bigcup_{j \neq i,0} \{  o_{ji} \} $ (see Figure \ref{inefficient}). The hyperplane $H_i$ separates the convex set $C_i$  from  $C_{j}$ for all $j \neq i,0$. Denote by $R_i$   the region   in the simplex given by the half-space of  $H_i$ that includes $C_i$. For any prior  $p \in R_i$, it is necessary that any feasible outcome induces the risky action $a_i$. Similarly, let $H_0$ be the hyperplane defined by $ \{ o_{10}, \ldots, o_{n0}\}$ and $R_0$ denote the convex region in the simplex that includes the node $\omega_0$ and is separated by the half-space of $H_0$.  For any prior  $p \in R_0$, it is necessary to play the safe action $a_0$ under any feasible outcome.  

For any $i,j \neq 0$ let $R_{ij}= R_i \cap R_j \cap R_0$ (see Figure \ref{inefficient}).   The convex set $R_{ij}$ is non-empty as $\omega_0 \in R_{ij}$.  In fact, we show that  $\dim(R_{ij})= \dim(\Delta \Omega)=n$ for all $i,j \neq 0$. Given a system of inequalities $Ax \leq  b$, an inequality  $a_i^Tx \leq b_i$ in $Ax \leq  b$ is an \emph{implicit equality}  if $a_i^T \bar{x} = b_i \quad \forall \bar{x} \in \{x : Ax \leq b \}.$ A  polyhedron $R \subseteq \mathbb{R}^{l}$ has full dimension $(\dim(R)=l)$ if and only if it has no implicit equality, as shown by \cite{conforti2014integerprogramming}.  The polyhedron $R_{ij}$ is defined by the system of inequalities of the hyperspaces $R_0$, $R_i$ and $R_j$. If the polyhedron $R_{ij}$ has an implicit equality, then  all points $p \in R_{ij}$  lie on the hyperplane $H_0$, $H_i$ or $H_j$. But this happens only if  $o_{ij}=o_{ji}$ or $o_{k0}=\omega_0$ where $k=i,j$. But as we assume   (1)  there is an open neighbourhood $N_{\omega_{0}} \subset C_0$ and (2) $C_i \cap C_j \neq \emptyset$,  we can conclude  there is no implicit equality for the polyhedron $R_{ij}$. Thus, we have $\dim(R_{ij})=\dim(\Delta \Omega)=n$.

Given prior $p \in R_{ij}$, any feasible outcome  induces  the actions $a_i$, $a_j$ and $a_0$.  We claim that posteriors of the BP outcome cannot include the vertices $\omega_i$ or $\omega_j$. We prove by contradiction, assume the feasible outcome $\mu_1$ is optimal and its induced  posteriors include the vertex $\omega_i$. As $p \in R_0$, it needs to induce action $a_0$ and its support includes the node $\omega_0$. Now, as $o_{i0} \in (\omega_0,\omega_i)$ and is separated from $p$ by the hyperplane $H_0$, there exists a feasible outcome $\mu_2$, where the belief $o_{io}$ is induced instead of $\omega_i$. This outcome leads to a higher probability of action $a_i$ and conversely a lower  probability of action $a_0$. Let $\lambda_i$ denote the weight of outcome $\mu_i$ on its posteriors. We have $\lambda_2(o_{io})= \frac{\lambda_1(\omega_i)}{o_{io}(\omega_i)} > \lambda_1(\omega_i)$ and $\lambda_2(\omega_0)= \lambda_1 (\omega_0) - \frac{\lambda_1(\omega_i) o_{i0}(\omega_0)}{ o_{i0}(\omega_i)} < \lambda_1(\omega_0)$. The weight on all other  actions $a_j\neq i$ remains the same. Thus, $\mu_2$ is a  profitable deviation and the posteriors of the BP outcome cannot include $\omega_i$ or $\omega_j$.  

Using the above result, for any prior  $p \in R_{ij}$,  the BP outcome $\mu^{*}_p$ has either   (a)  three actions $a_i$, $a_j$ and $a_0$ played in  state $\omega_0$ (e.g., $\bigcup_{m \in M} q_ m = \{ \omega_0, o_{10}, o_{20} \}$ in Figure \ref{inefficient}) or (b) mixed outcomes in  states $\omega_i$ and $\omega_j$ (e.g., $\bigcup_{m \in M} q_m = \{ \omega_0, o_{12}, o_{21} \}$ in  Figure \ref{inefficient}). This violates the necessary condition for efficiency in  \Cref{necessary}.  The set $R_*=  \bigcup_{j \neq 0} \bigcup_{i\neq j,0} R_i \cap R_j \cap R_0$ combines  the regions $R_{ij}$ for  all   pairs of distinct risky actions $a_i$ and $a_j$. 

\subsection{Proof of Proposition \ref{BPfixedprior}}

First, we show  for any $p \in \mathrm{int}(\Delta \Omega)$ and action $a_i$, there exists a threshold $T_p^i$  such that whenever  $T > T_p^i$,  any feasible outcome induces that action. Using this characterization, we then show that there exist receiver preferences under which the BP outcome $\mu^{*}_p$ is inefficient for the partition $\mathcal{P}_T$.

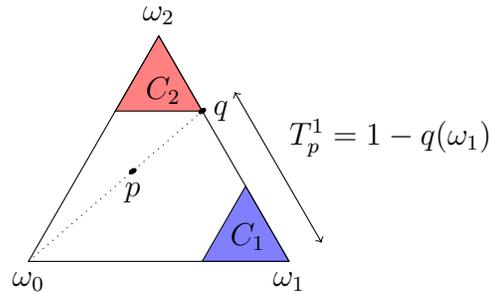
\begin{figure}[ht!] 
	\centering
	\hspace{3cm} 
	\begin{tikzpicture}
		\draw[fill=white, xscale=2/sqrt(3),xslant=.5]
		(3,0)-|(0,3)--cycle;
		\draw[fill=blue, fill opacity=0.50, xscale=2/sqrt(3),xslant=.5] 
		(3,0)-- (2,0)--(2,1)--(3,0);
		\draw[fill=red, fill opacity=0.50, xscale=2/sqrt(3),xslant=.5]  
		(0,3)-- (0,2) -- (1,2) -- (0,3);
		\filldraw[black, xscale=2/sqrt(3),xslant=.5] (2.2,0.65) circle (0pt) node[anchor=north]{$C_1$};
		\filldraw[black, xscale=2/sqrt(3),xslant=.5] (0.25,2.6) circle (0pt) node[anchor=north]{$C_2$};
		\filldraw[black, xscale=2/sqrt(3),xslant=.5] (0.6,1.2) circle (1pt) node[anchor=north]{$p$};

		\filldraw[black, xscale=2/sqrt(3),xslant=.5] (1,2) circle (1pt) node[anchor=west]{$q$};
		\filldraw[black, xscale=2/sqrt(3),xslant=.5] (0,0) circle (0pt) node[anchor=north]{$\omega_0$};
		\filldraw[black ,xscale=2/sqrt(3),xslant=.5] (3,0) circle (0pt) node[anchor=north]{$\omega_1$};
		\filldraw[black, xscale=2/sqrt(3),xslant=.5] (0,3) circle (0pt) node[anchor=south]{$\omega_2$};


		
		\draw[<->,xscale=2/sqrt(3),xslant=.5] (3.25,0.25) -- (1.25,2.25) node[anchor=south west ,pos=0.5]{$T_p^1=1-q(\omega_1)$};

		\draw[dotted,xscale=2/sqrt(3),xslant=.5] (0,0) -- (1,2);
		
		


	\end{tikzpicture}
	\caption{Threshold beliefs for $n=2$.}
	\label{figurepartition}
\end{figure}

Let $q$ denote the point of intersection between the line  joining the points $\omega_0$ and $p$ and  the face $F_0= \{ p \in \Delta \Omega: p(\omega_0)=0 \}$.    Recall, the hyperplane $H_i$  defined by the points  $ \{  \omega_0  \} \bigcup_{j \neq i,0} \{  o_{ji} \} $. It separates $C_i$ and the convex sets $\{ C_j\}_{j\neq i,0}$. The point $p$ lies in the region $R_i$  if $T \geq T_p^i= 1-q(\omega_i)$ (see Figure \ref{figurepartition}). This follows as $p \in R_i$ if and only if it's projection $q \in R_i$. And $q$ belongs to $R_i$ (for $i \neq 0)$  if and only if $q(\omega_i)  \geq 1 -T$.    Similarly, the hyperplane $H_0$ separates the vertex $\omega_0$ and the convex sets $\{ C_j\}_{j\neq 0} $ and  we have $p \in R_0$ if $T \geq T_p^0= 1 - p(\omega_0)$.  

Let $i^*= \underset{i\neq 0}{  \text{argmax }}  q(\omega_i)$ and let $j^*= \underset{i \neq 0, i^*}{ \text{argmax }}  q(\omega_i)$. The vertices $\omega_{i^*}$ and $\omega_{j^*}$ represent the states that are closest to the projection $q \in \Delta \Omega$. We have $p \in \mathrm{int}(R_{i^*} \cap R_{j^*} \cap R_0)$ if
\begin{equation} \label{threshold}
	T > T_p= \max \{ 1-p(\omega_0),1- q(\omega_{i^*}), 1-q(\omega_{j^*}) \}.
\end{equation}

Since the common prior $p \in \mathrm{int}(\Delta \Omega)$, we have $p(\omega_0) > 0$ and  $q(\omega_i) < 1$ for $i \neq 0$. Hence each term in  \Cref{threshold} is strictly less than $1$,  and therefore $T_p < 1$. 

Therefore, whenever $T > T_p$, the Bayesian persuasion outcome $\mu^{*}_p$ lies in $R_*$ and is inefficient with respect to the partition $\mathcal{P}_T$.

\subsection{Proof of Proposition \ref{prop:peer}}

Under our assumptions, in the first step the principal randomly selects among at least two agents, i.e., $t(k-1) \geq 2$. At least one of these agents, if chosen, is allocated the good. Hence, with positive probability, either two distinct agents receive the good or one agent and the principal do. Thus, for all $\omega \in \Omega$, we have $|\mu(\omega)| \geq 2$. Moreover, since each agent has at least two types,  $|\Omega| = \prod_{i=1}^{k-1} |\Omega_i| \geq 2^{k-1}.$  Therefore,  $ \sum_{\omega \in \Omega} |\mu(\omega)| \geq\; 2|\Omega| > k+ |\Omega|  $, where the last inequality uses $2^{k-1} > k$ for $k \geq 3$.  This contradicts the bound in  \Cref{necessary},  so the outcome induced by the ranking-based mechanism is inefficient.

\end{document}